\newcommand{\RB}{\mathbb{R}}
\newcommand{\ZB}{\mathbb{Z}}
\newcommand{\pfrac}[2]{\frac{\partial{#1}}{\partial{#2}}}
\newcommand{\HCd}{\mathcal{H}}
\newcommand{\LCd}{\mathcal{L}}
\newcommand{\onehalf}{{\textstyle\frac{1}{2}}}
\newcommand{\quarter}{{\textstyle\frac{1}{4}}}
\newcommand{\bgamma}{\pmb{\gamma}}
\newcommand{\bsigma}{\pmb{\sigma}}
\newcommand{\rmi}{\mathrm{i}}
\newcommand{\dete}{\varepsilon}
\DeclareMathOperator{\Tr}{Tr}
\begin{document}

\title{Identity for scalar-valued functions of tensors and its applications\\
\subtitlefont{to energy-momentum tensors in classical field theories and gravity}}

\author[1,2]{J\"urgen Struckmeier}
\author[1,2]{Armin van de Venn}
\author[1]{David Vasak}

\authormark{STRUCKMEIER \textsc{et al}}

\address[1]{\orgname{Frankfurt Institute for Advanced Studies (FIAS)}, \orgaddress{Ruth-Moufang-Str. 1,\\ 60438 Frankfurt am Main}, \country{Germany}}
\address[2]{\orgname{Goethe Universit\"at}, \orgaddress{Max-von-Laue-Str. 1, 60438 Frankfurt am Main}, \country{Germany}}
\corres{Jürgen Struckmeier, Frankfurt Institute for Advanced Studies (FIAS), Ruth-Moufang-Str. 1, 60438 Frankfurt am Main, Germany.\\
\email{struckmeier@fias.uni-frankfurt.de}}

\abstract{
We prove a theorem on scalar-valued functions of tensors, where ``scalar'' refers to absolute scalars as well as relative scalars of weight $w$.
The present work thereby generalizes an identity referred to earlier by Rosenfeld in his publication ``On the energy-momentum tensor''.
The theorem provides a $(1,1)$-tensor identity which can be regarded as the tensor analogue of the identity following from Euler's theorem on homogeneous functions.
The remarkably simple identity is independent of any internal symmetries of the constituent tensors,
providing a powerful tool for deriving relations between field-theoretical expressions and physical quantities.
We apply the identity especially for analyzing the metric and canonical energy-momentum tensors of matter and gravity and the relation between them.
Moreover, we present a generalized Einstein field equation for arbitrary version of vacuum space-time dynamics --- including torsion and non-metricity.
The identity allows to formulate an equivalent representation of this equation.
Thereby the conjecture of a zero-energy universe is confirmed.
}

\keywords{Scalar-valued function of tensors; Energy-momentum tensor; Semi-classical field theory of gravity and matter;
Generalized Einstein field equation; Zero-energy universe.}


\fundingInfo{Walter Greiner Gesellschaft zur Förderung der physikalischen Grundlagenforschung e.V., Frankfurt am Main}

\maketitle


\section{Introduction}

In this paper we prove an identity for scalar-valued functions of tensors that constitutes the tensor analogue of Euler's theorem on homogeneous functions of calculus.
Its proof is rather straightforward and the identity itself appears almost trivial, but certainly not obvious.
The identity provides a valuable and effective means for establishing relations between tensor expressions.
It remained rather unnoticed by the community.

A prominent example addressed here is the long-standing question of the permissibility of
``improving'' the canonical energy-momentum tensor for fields of spin higher than zero.
This topic was addressed much earlier by Rosenfeld~\citep{rosenfeld40} and by Belinfante~\citep{belinfante39}
in the context of symmetrizing non-symmetric canonical energy-momentum tensors.
The common objective was to identify a form suitable for furnishing the source term in the classical Einstein field equation with its symmetric Einstein tensor.
In later papers of Sciama~\citep{sciama62}, Kibble~\citep{kibble61}, and Hehl~et~al.~\citep{hehl76}, non-symmetric canonical energy-momentum tensors were
identified as the source of torsion of space-time in a generalized Einstein-Cartan-Sciama-Kibble (ECSK) theory of gravity.

The paper is organized as follows.
In Section~\ref{sec:proof} we first prove the theorem for \emph{absolute} scalars constituted
by contracting $(n,n)$-tensors that on their part may be tensor products of lower rank tensors.
On this basis, the theorem is then generalized for tensors whose indices refer to different vector spaces,
such as vierbeins (tetrads) and spinor-tensors (e.g.\ Dirac matrices), and to \emph{relative} tensors of any weight $w$.
As an instructive example, we apply the theorem to the determinant of a $(0,2)$-tensor, which represents a relative scalar of weight $w=2$,
We thus find immediately the derivative of this determinant with respect to a tensor component---without the need to refer to the definition of a determinant.

In Section~\ref{sec:applications} we apply the identity to Lagrangians of semi-classical field theories that describe the dynamics
of scalar, vector, and spinor fields in an arbitrary geometry of space-time.
These Lagrangians are relative scalars of weight $w=1$, commonly referred to as scalar densities.
While the functional derivative of a Lagrangian density with respect to the metric gives the \emph{metric} energy-momentum tensor,
the \emph{canonical} one is derived from Noether's theorem.
With help of the identity, a generic relation between the metric and the canonical (Noether) energy-momentum tensors
is then established, and applied to the Klein-Gordon, Proca, and Dirac systems.

In Section~\ref{sec:gravity} the identity is applied to a variety of gravity theories, some including torsion and even non-metricity,
with the metric and canonical energy-momentum tensor derived in analogy to the treatment of matter systems.
This analogy naturally leads to a conservation law of the total energy-momentum of matter and space-time (``Zero-Energy Universe'').
The underlying energy-momentum balance equation is identified as the generalized \emph{consistency equation}
that in the specific case of the Einstein-Hilbert Lagrangian gives the well-known field equation of Einstein's General Relativity.
In the Riemann-Cartan geometry a Poisson-like equation for torsion emerges in addition to an extended field equation.

Section~\ref{sec:conclusions} finally gives the summary of the paper and the conclusions.
\section{Theorem for scalar-valued functions of tensors}\label{sec:proof}
\subsection{Absolute tensors}
In the following, we prove a theorem for a scalar that is formed from contractions of an arbitrary $(n,n)$ tensor $T$
and for any $(n,n)$ tensor field $T(x)$ at a point with the coordinate $x$ in an arbitrary geometry of the underlying manifold.
\footnote{A similar consideration can be found in Rosenfeld's paper~\citep{rosenfeld40}.}
\begin{theorem}
Let $S=S(T)\in\RB$ be a scalar-valued function constructed from a complete contraction of a $(n,n)$-tensor
$T\indices{^{\xi_{1}\ldots\xi_{n}}_{\eta_{1}\ldots\eta_{n}}}$, with the ordered index set $\{\eta_k\}$ a bijective
permutation $\{\xi_{\pi(k)}\}$ of the ordered index set $\{\xi_j\}$ such that $\eta_k\equiv\xi_{\pi(k)}\equiv\xi_j$.
Then the following identity holds:
\begin{align}
&-\pfrac{S}{T\indices{^{\nu \xi_{2}\ldots\xi_{n}}_{ \eta_{1}\ldots\eta_{n}}}}
T\indices{^{\mu \xi_{2}\ldots\xi_{n}}_{ \eta_{1}\ldots\eta_{n}}}\nonumber\\
&-\ldots-\pfrac{S}{T\indices{^{\xi_{1}\ldots\xi_{n-1} \nu}_{ \eta_{1}\ldots\eta_{n}}}}
T\indices{^{\xi_{1}\ldots\xi_{n-1} \mu}_{ \eta_{1}\ldots\eta_{n}}}\nonumber\\
&+\pfrac{S}{T\indices{^{\xi_{1}\ldots\xi_{n}}_{ \mu \eta_{2}\ldots\eta_{n}}}}
T\indices{^{\xi_{1}\ldots\xi_{n}}_{ \nu \eta_{2}\ldots\eta_{n}}}\nonumber\\
&+\ldots+\pfrac{S}{T\indices{^{\xi_{1}\ldots\xi_{n}}_{ \eta_{1}\ldots\eta_{n-1} \mu}}}
T\indices{^{\xi_{1}\ldots\xi_{n}}_{ \eta_{1}\ldots\eta_{n-1} \nu}}\equiv0.
\label{eq:assertion1}
\end{align}
Thus, in each term, exactly one previously contracted index $\xi_j$ resp.\ $\eta_j$ of $S$
is replaced by the now open indices $\nu$ and $\mu$.
\end{theorem}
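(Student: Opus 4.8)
The plan is to exploit the only structural input available, namely that $S$ is an absolute scalar: being a complete contraction of $T$, it takes the same value in every frame. I would make the group $GL(d,\RB)$ acting on the indices explicit and pass to its Lie algebra. Write $\Lambda\indices{^{\mu}_{\nu}}=\delta\indices{^{\mu}_{\nu}}+\epsilon\indices{^{\mu}_{\nu}}$ for an infinitesimal transformation with \emph{arbitrary} generator $\epsilon\indices{^{\mu}_{\nu}}$, so that $(\Lambda^{-1})\indices{^{\mu}_{\nu}}=\delta\indices{^{\mu}_{\nu}}-\epsilon\indices{^{\mu}_{\nu}}+O(\epsilon^{2})$. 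Dressing every contravariant slot of $T$ with $\Lambda$ and every covariant slot with $\Lambda^{-1}$ gives, to first order,
\begin{align}
\delta T\indices{^{\xi_{1}\ldots\xi_{n}}_{\eta_{1}\ldots\eta_{n}}}
&=\epsilon\indices{^{\xi_{1}}_{\rho}}\,T\indices{^{\rho\,\xi_{2}\ldots\xi_{n}}_{\eta_{1}\ldots\eta_{n}}}+\ldots+\epsilon\indices{^{\xi_{n}}_{\rho}}\,T\indices{^{\xi_{1}\ldots\xi_{n-1}\rho}_{\eta_{1}\ldots\eta_{n}}}\nonumber\\
&\quad-\epsilon\indices{^{\rho}_{\eta_{1}}}\,T\indices{^{\xi_{1}\ldots\xi_{n}}_{\rho\,\eta_{2}\ldots\eta_{n}}}-\ldots-\epsilon\indices{^{\rho}_{\eta_{n}}}\,T\indices{^{\xi_{1}\ldots\xi_{n}}_{\eta_{1}\ldots\eta_{n-1}\rho}}.
\end{align}
Because $S$ is a scalar, its value on the transported components equals $S(T)$, whence $0=\delta S=\pfrac{S}{T\indices{^{\xi_{1}\ldots\xi_{n}}_{\eta_{1}\ldots\eta_{n}}}}\,\delta T\indices{^{\xi_{1}\ldots\xi_{n}}_{\eta_{1}\ldots\eta_{n}}}$, the derivative taken at the unperturbed $T$. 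Substituting the displayed $\delta T$ and reading off the coefficient of the arbitrary matrix element $\epsilon\indices{^{\nu}_{\mu}}$ reproduces assertion~(\ref{eq:assertion1}): the $n$ contributions from the contravariant slots furnish the first block of $n$ terms, the $n$ from the covariant slots the second block, and the relative minus sign between the blocks is exactly the one separating $\Lambda$ from $\Lambda^{-1}$. Since $\epsilon\indices{^{\nu}_{\mu}}$ is unconstrained, the coefficient vanishes identically, and in each of the $2n$ terms precisely one formerly summed index has been reopened, as stated after the theorem.

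As an independent check I would verify the identity combinatorially in the case where $S$ is the complete contraction itself: from $S=T\indices{^{\xi_{1}\ldots\xi_{n}}_{\xi_{\pi(1)}\ldots\xi_{\pi(n)}}}$ one gets $\pfrac{S}{T\indices{^{b_{1}\ldots b_{n}}_{c_{1}\ldots c_{n}}}}=\prod_{k=1}^{n}\delta\indices{^{b_{\pi(k)}}_{c_{k}}}$ for the generic, unconstrained tensor. Inserting this into the $j$-th term of the contravariant block and performing the induced sums collapses the Kronecker symbols to a single complete contraction of $T$ in which only the upper slot $j$ (carrying $\mu$) and the paired lower slot $\pi^{-1}(j)$ (carrying $\nu$) stay open; the $l$-th term of the covariant block likewise reduces to the complete contraction with upper slot $\pi(l)$ carrying $\mu$ and lower slot $l$ carrying $\nu$ open. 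The bijection $l=\pi^{-1}(j)$ then pairs the two blocks term by term, so they cancel. When $T$ carries internal symmetries the partial derivatives must be symmetrized accordingly, but the invariance argument of the previous paragraph goes through unchanged, since $\delta T$ respects those symmetries; this is the origin of the stated independence from internal symmetries.

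The only real care the argument needs is the index bookkeeping — keeping track, in each of the $2n$ terms, of which contracted index is reopened as $\mu$ and which as $\nu$, and of the sign inherited from $\Lambda$ versus $\Lambda^{-1}$ — so I expect no genuine obstruction. Running the same Lie-algebra computation with the appropriate transformation law (an extra $\Lambda$-factor on frame indices, a spinor generator on Dirac indices, and the factor $(\det\Lambda)^{w}=1+w\,\epsilon\indices{^{\rho}_{\rho}}+O(\epsilon^{2})$ for a relative scalar of weight $w$) then delivers, step by step, the generalizations announced in the Introduction, the weight contributing precisely the extra term $\propto w\,S\,\delta\indices{^{\mu}_{\nu}}$ that turns the right-hand side of~(\ref{eq:assertion1}) into the genuine tensor analogue of Euler's relation.
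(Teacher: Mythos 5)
Your proposal is correct, but its main argument is genuinely different from the paper's proof. The paper proceeds by direct differentiation: it observes that $\partial T\indices{^\beta_\beta}/\partial T\indices{^\alpha_\mu}=\delta_\alpha^\beta\delta_\beta^\mu$ and shows that, for each contracted pair $(\xi_j,\eta_k)$, the corresponding upper-slot term and lower-slot term in~(\ref{eq:assertion1}) reduce to the same partially opened contraction with opposite signs, so the sum cancels pairwise. Your primary route instead derives the identity as the infinitesimal expression of the $GL(d,\RB)$-invariance of a complete contraction: you dress each slot with $\Lambda=\mathbb{1}+\epsilon$, use $S(\Lambda\cdot T)=S(T)$, and read off the coefficient of the arbitrary generator $\epsilon\indices{^\nu_\mu}$. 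That step is valid (the invariance does follow from $S$ being built by complete contraction, since each traced upper--lower pair absorbs a $\Lambda\Lambda^{-1}$), and it buys something the paper's computation does not make explicit: it explains conceptually why the identity is the tensor analogue of Euler's theorem, it applies verbatim to any $GL$-invariant function of the components (not only contractions linear in $T$), and it delivers the later generalizations --- multiple index classes and relative scalars of weight $w$ via the factor $(\det\Lambda)^w$ --- from one uniform calculation, with the correct extra term $w\,S\,\delta_\nu^\mu$ coming from the trace part of $\epsilon$. Conversely, the paper's proof is more elementary and exhibits the pairwise-cancellation structure directly; your ``independent combinatorial check'' with $\partial S/\partial T=\prod_k\delta\indices{^{b_{\pi(k)}}_{c_k}}$ is essentially that proof, pairing the $j$-th upper term with the $l=\pi^{-1}(j)$-th lower term. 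The only point to state carefully in a write-up is the justification of $S(\Lambda\cdot T)=S(T)$ from the hypothesis (invariance of the contraction, not merely ``scalar under coordinate changes''), and that the weight-$w$ case uses generators with nonvanishing trace; both are handled correctly in your sketch.
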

\begin{proof}
The proof is based on the simple relation
\begin{align*}
\pfrac{S}{T\indices{^\alpha_\mu}}T\indices{^\alpha_\nu}&=
\pfrac{T\indices{^\beta_\beta}}{T\indices{^\alpha_\mu}}T\indices{^\alpha_\nu}=
\delta_\alpha^\beta \delta_\beta^\mu T\indices{^\alpha_\nu}=T\indices{^\mu_\nu}\\
&=\delta_\nu^\beta \delta_\beta^\alpha T\indices{^\mu_\alpha}=
\pfrac{T\indices{^\beta_\beta}}{T\indices{^\nu_\alpha}}T\indices{^\mu_\alpha}=
\pfrac{S}{T\indices{^\nu_\alpha}}T\indices{^\mu_\alpha}
\end{align*}
It is straightforward to see that this relation applies for any pair of contracted indices.
Then for any pair $\xi_j$ and $\eta_k$ of contracted indices $\eta_k\equiv\xi_{\pi(k)}\equiv\xi_j$ of $S=T\indices{^{\xi_{1}\ldots\xi_{n}}_{\eta_{1}\ldots\eta_{n}}}$
we find for their respective replacement by $\nu$ and $\mu$ according to Eq.~(\ref{eq:assertion1}):
\begin{align*}
&-\pfrac{S}{T\indices{^{\xi_{1}\ldots\xi_{j-1} \nu \xi_{j+1}\ldots\xi_{n}}_{\eta_1\ldots\eta_{k-1} \eta_{k} \eta_{k+1}\ldots\eta_{n}}}}
T\indices{^{\xi_{1}\ldots\xi_{j-1} \mu \xi_{j+1}\ldots\xi_{n}}_{\eta_1\ldots\eta_{k-1} \eta_{k} \eta_{k+1}\ldots\eta_{n}}}\\
&+\pfrac{S}{T\indices{^{\xi_{1}\ldots\xi_{j-1} \xi_j \xi_{j+1}\ldots\xi_{n}}_{\eta_1\ldots\eta_{k-1} \mu \eta_{k+1}\ldots\eta_{n}}}}
T\indices{^{\xi_{1}\ldots\xi_{j-1} \xi_j \xi_{j+1}\ldots\xi_{n}}_{\eta_{1}\ldots\eta_{k-1} \nu \eta_{k+1}\ldots\eta_{n}}}\\
&=-\delta_{\nu}^{\xi_j} T\indices{^{\xi_{1}\ldots\xi_{j-1} \mu \xi_{j+1}\ldots\xi_{n}}_{\eta_{1}\ldots\eta_{k-1} \eta_{k} \eta_{k+1}\ldots\eta_{n}}}\\
&\quad+\delta_{\eta_{k}}^{\mu} T\indices{^{\xi_{1}\ldots\xi_{j-1} \xi_j \xi_{j+1}\ldots\xi_{n}}_{\eta_{1}\ldots\eta_{k-1} \nu \eta_{k+1}\ldots\eta_{n}}}\\
&=-\delta_{\nu}^{\xi_j} T\indices{^{\xi_{1}\ldots\xi_{j-1} \mu \xi_{j+1}\ldots\xi_{n}}_{\eta_{1}\ldots\eta_{k-1} \xi_{j} \eta_{k+1}\ldots\eta_{n}}}\\
&\quad+\delta_{\eta_{k}}^{\mu} T\indices{^{\xi_{1}\ldots\xi_{j-1} \eta_k \xi_{j+1}\ldots\xi_{n}}_{\eta_{1}\ldots\eta_{k-1} \nu \eta_{k+1}\ldots\eta_{n}}}\\
&=-T\indices{^{\xi_{1}\ldots\xi_{j-1} \mu \xi_{j+1}\ldots\xi_{n}}_{\eta_{1}\ldots\eta_{k-1} \nu \eta_{k+1}\ldots\eta_{n}}}\\
&\quad+T\indices{^{\xi_{1}\ldots\xi_{j-1} \mu \xi_{j+1}\ldots\xi_{n}}_{\eta_{1}\ldots\eta_{k-1} \nu \eta_{k+1}\ldots\eta_{n}}}\\
&\equiv0,
\end{align*}
which proves the assertion.
\end{proof}
The $(n,n)$-tensor $T\indices{^{\xi_{1}\ldots\xi_{n}}_{ \eta_{1}\ldots\eta_{n}}}$
may be in particular a tensor product of tensors of lower ranks.
This is demonstrated in the following example:
\begin{eexample}
Let $S$ be a scalar emerging from the contraction of arbitrary tensors
$A\indices{^\alpha^\beta_\xi}$ with $B\indices{_\alpha^\xi^\eta}$ and $C_\beta$:
\begin{equation}\label{eq:arb-scal-def}
S=A\indices{^\alpha^\beta_\xi} B\indices{_\alpha^\xi^\eta} C_\beta C_\eta.
\end{equation}
Then
\begin{align}
&-\pfrac{S}{A\indices{^{\nu\beta}_{\xi}}}A\indices{^{\mu\beta}_{\xi}}
-\pfrac{S}{A\indices{^{\alpha\nu}_{\xi}}}A\indices{^{\alpha\mu}_{\xi}}
+\pfrac{S}{A\indices{^{\alpha\beta}_{\mu}}}A\indices{^{\alpha\beta}_{\nu}}\nonumber\\
&-\pfrac{S}{B\indices{_\alpha^\nu^\eta}}B\indices{_\alpha^\mu^\eta}
-\pfrac{S}{B\indices{_\alpha^\xi^\nu}}B\indices{_\alpha^\xi^\mu}
+\pfrac{S}{B\indices{_\mu^\xi^\eta}}B\indices{_\nu^\xi^\eta}+\pfrac{S}{C_\mu}C_\nu\equiv0.
\label{eq:arb-scal}
\end{align}
For the proof of the identity~(\ref{eq:arb-scal}) the respective terms of the sum are worked out explicitly:
\begin{align*}
-\pfrac{S}{A\indices{^{\nu\beta}_\xi}}A\indices{^{\mu\beta}_\xi}&=
-\delta_\nu^\alpha A\indices{^{\mu\beta}_\xi} B\indices{_\alpha^\xi^\eta} C_\beta C_\eta
=-A\indices{^{\mu\beta}_\xi} B\indices{_\nu^\xi^\eta} C_\beta C_\eta\\
-\pfrac{S}{A\indices{^{\alpha\nu}_\xi}}A\indices{^{\alpha\mu}_\xi}&=
-\delta_\nu^\beta A\indices{^{\alpha\mu}_\xi} B\indices{_\alpha^\xi^\eta} C_\beta C_\eta
=-A\indices{^{\alpha\mu}_\xi} B\indices{_\alpha^\xi^\eta} C_\nu C_\eta\\
\hphantom{-}\pfrac{S}{A\indices{^{\alpha\beta}_\mu}}A\indices{^{\alpha\beta}_\nu}&=
\hphantom{-}\delta_\xi^\mu A\indices{^{\alpha\beta}_\nu} B\indices{_\alpha^\xi^\eta} C_\beta C_\eta
=\hphantom{-}A\indices{^{\alpha\beta}_\nu} B\indices{_\alpha^\mu^\eta} C_\beta C_\eta\\
-\pfrac{S}{B\indices{_\alpha^\nu^\eta}}B\indices{_\alpha^\mu^\eta}&=
-A\indices{^{\alpha\beta}_\xi} \delta_\nu^\xi B\indices{_\alpha^\mu^\eta} C_\beta C_\eta
=-A\indices{^{\alpha\beta}_\nu} B\indices{_\alpha^\mu^\eta} C_\beta C_\eta\\
-\pfrac{S}{B\indices{_\alpha^{\xi\nu}}}B\indices{_\alpha^{\xi\mu}}
&=-A\indices{^{\alpha\beta}_\xi} \delta_\nu^\eta B\indices{_\alpha^\xi^\mu} C_\beta C_\eta
=-A\indices{^{\alpha\beta}_\xi} B\indices{_\alpha^\xi^\mu} C_\beta C_\nu\\
\hphantom{-}\pfrac{S}{B\indices{_\mu^{\xi\eta}}}B\indices{_\nu^{\xi\eta}}
&=\hphantom{-}A\indices{^{\alpha\beta}_\xi} \delta_\alpha^\mu B\indices{_\nu^\xi^\eta} C_\beta C_\eta
=\hphantom{-}A\indices{^{\mu\beta}_\xi} B\indices{_\nu^\xi^\eta} C_\beta C_\eta\\
\hphantom{-}\pfrac{S}{C_\mu}C_\nu
&=\hphantom{-}A\indices{^{\alpha\beta}_\xi} B\indices{_\alpha^\xi^\eta} \delta^\mu_\beta C_\nu C_\eta
+A\indices{^{\alpha\beta}_\xi} B\indices{_\alpha^\xi^\eta} C_\beta \delta_\eta^\mu C_\nu\\
&=\hphantom{-}A\indices{^{\alpha\mu}_\xi} B\indices{_\alpha^\xi^\eta} C_\nu C_\eta
+A\indices{^{\alpha\beta}_\xi} B\indices{_\alpha^\xi^\mu} C_\beta C_\nu.
\end{align*}
The terms on the right-hand sides obviously sum up to zero.
\end{eexample}
Of course, this example can be generalized to the contraction of any number of tensors of any rank.
It is important to stress that the above identity is entirely unrelated to possible internal symmetries of the constituent tensors,
and is valid in addition to any symmetry related identities like e.g.\ the Bianchi identities in field theories of gravity involving the Riemann tensor.
Nonetheless, internal tensor symmetries can be exploited to simplify the resulting identity.
If, for instance, the tensor $A\indices{^\alpha^\beta_\xi}$ in~(\ref{eq:arb-scal-def}) is symmetric or skew-symmetric in its upper index pair,
then the first two terms on the left-hand side of the identity~(\ref{eq:arb-scal}) can be merged into a single term.
\subsection{Scalar functions involving the metric tensor}
For $S=S(g,T)\in\RB$, a scalar-valued function constructed from the metric tensor $g_{\mu\nu}$
and an $(n,m)$-tensor $T\indices{^{\xi_{1}\ldots\xi_{n}}_{\eta_{1}\ldots\eta_{m}}}$,
the following identity then holds for \mbox{$(n-m)/2\in\ZB$}:
\begin{align}
0\equiv\pfrac{S}{g_{\mu\beta}}g_{\nu\beta}&+\pfrac{S}{g_{\beta\mu}}g_{\beta\nu}-
\pfrac{S}{T\indices{^{\nu \xi_{2}\ldots\xi_{n}}_{\eta_{1}\ldots\eta_{m}}}}
T\indices{^{\mu \xi_{2}\ldots\xi_{n}}_{\eta_{1}\ldots\eta_{m}}}\nonumber\\
&-\ldots-\pfrac{S}{T\indices{^{\xi_{1}\ldots\xi_{n-1} \nu}_{\eta_{1}\ldots\eta_{m}}}}
T\indices{^{\xi_{1}\ldots\xi_{n-1} \mu}_{\eta_{1}\ldots\eta_{m}}}\nonumber\\
&+\pfrac{S}{T\indices{^{\xi_{1}\ldots\xi_{n}}_{\mu \eta_{2}\ldots\eta_{m}}}}
T\indices{^{\xi_{1}\ldots\xi_{n}}_{\nu \eta_{2}\ldots\eta_{m}}}\nonumber\\
&+\ldots+\pfrac{S}{T\indices{^{\xi_{1}\ldots\xi_{n}}_{\eta_{1}\ldots\eta_{m-1} \mu}}}
T\indices{^{\xi_{1}\ldots\xi_{n}}_{\eta_{1}\ldots\eta_{m-1} \nu}}.
\label{eq:assertion1a}
\end{align}
\begin{corollary}
The trace of Eq.~(\ref{eq:assertion1a}) then yields the scalar identity:
\begin{equation}\label{eq:assertion-contr}
\pfrac{S}{g_{\alpha\beta}}g_{\alpha\beta}\equiv\frac{n-m}{2}\pfrac{S}{T\indices{^{\xi_{1}\ldots\xi_{n}}_{\eta_{1}\ldots\eta_{m}}}}
T\indices{^{\xi_{1}\ldots\xi_{n}}_{\eta_{1}\ldots\eta_{m}}}.
\end{equation}
\end{corollary}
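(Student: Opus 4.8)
The plan is to derive (eq:assertion-contr) directly from the $(1,1)$-tensor identity (eq:assertion1a) by taking its trace, i.e.\ by setting $\nu=\mu$ and summing over $\mu$. Once the two free indices are identified, each of the three groups of terms on the right-hand side of (eq:assertion1a) collapses: the metric terms combine into a multiple of $\pfrac{S}{g_{\alpha\beta}}g_{\alpha\beta}$, and each of the two blocks of $T$-derivative terms collapses into a multiple of the full contraction $\pfrac{S}{T\indices{^{\xi_{1}\ldots\xi_{n}}_{\eta_{1}\ldots\eta_{m}}}}T\indices{^{\xi_{1}\ldots\xi_{n}}_{\eta_{1}\ldots\eta_{m}}}$, with the multiplicities counting the number of contracted indices.

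First I would handle the metric contribution. Setting $\nu=\mu$ turns $\pfrac{S}{g_{\mu\beta}}g_{\nu\beta}+\pfrac{S}{g_{\beta\mu}}g_{\beta\nu}$ into $\pfrac{S}{g_{\mu\beta}}g_{\mu\beta}+\pfrac{S}{g_{\beta\mu}}g_{\beta\mu}$; after renaming the summation indices these two expressions are identical, so their sum is $2\,\pfrac{S}{g_{\alpha\beta}}g_{\alpha\beta}$. Next I would handle the first (upper-index) block of $T$-terms: its $j$-th member is $-\pfrac{S}{T\indices{^{\xi_{1}\ldots\xi_{j-1}\nu\xi_{j+1}\ldots\xi_{n}}_{\eta_{1}\ldots\eta_{m}}}}T\indices{^{\xi_{1}\ldots\xi_{j-1}\mu\xi_{j+1}\ldots\xi_{n}}_{\eta_{1}\ldots\eta_{m}}}$, and putting $\nu=\mu$ and renaming the $j$-th upper slot back to $\xi_j$ reproduces exactly $-\pfrac{S}{T\indices{^{\xi_{1}\ldots\xi_{n}}_{\eta_{1}\ldots\eta_{n}}}}T\indices{^{\xi_{1}\ldots\xi_{n}}_{\eta_{1}\ldots\eta_{m}}}$, independently of $j$; since there are $n$ such terms this block contributes $-n\,\pfrac{S}{T}T$ (writing $\pfrac{S}{T}T$ for the full contraction). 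The second (lower-index) block works identically but carries the opposite overall sign and runs over the $m$ lower indices, so it contributes $+m\,\pfrac{S}{T}T$.

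Adding the three pieces and using that the left-hand side of (eq:assertion1a) vanishes gives $2\,\pfrac{S}{g_{\alpha\beta}}g_{\alpha\beta}-n\,\pfrac{S}{T}T+m\,\pfrac{S}{T}T=0$, that is, $\pfrac{S}{g_{\alpha\beta}}g_{\alpha\beta}=\tfrac{n-m}{2}\,\pfrac{S}{T}T$, which is the claimed identity. The hypothesis $(n-m)/2\in\ZB$ is simply inherited from (eq:assertion1a) and is not used again.

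There is essentially no obstacle beyond careful bookkeeping: the two points to verify are that each of the $n+m$ individual $T$-derivative terms really reduces, after the contraction $\nu=\mu$ and a relabelling of dummy indices, to the \emph{same} scalar $\pfrac{S}{T}T$ (so that the combinatorial factors $n$ and $-m$ genuinely appear), and that the two metric terms add rather than cancel. Both are immediate, which is why the statement is phrased as a corollary.
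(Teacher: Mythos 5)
Your proposal is correct and follows exactly the paper's route: the paper's proof is simply the one-line remark that contracting Eq.~(\ref{eq:assertion1a}) (setting $\nu=\mu$ and summing) immediately yields Eq.~(\ref{eq:assertion-contr}), and your detailed bookkeeping --- the factor $2$ from the two metric terms and the multiplicities $n$ and $m$ from the upper- and lower-index blocks --- is precisely what that contraction amounts to. (Only a harmless typo: in your upper-index block the derivative's lower indices should read $\eta_1\ldots\eta_m$, not $\eta_1\ldots\eta_n$.)
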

\begin{proof}
Contracting Eq.~(\ref{eq:assertion1a}) immediately gives Eq.~(\ref{eq:assertion-contr}).
\end{proof}
\subsection{Scalar functions involving tensors with multiple index classes}
\begin{corollary}
The theorem~(\ref{eq:assertion1}) holds also for scalars $S$ constructed from generalized tensor objects---such
as spinor-tensors---which are made of multiple index classes.
Examples of such objects are Dirac matrices $\gamma^\mu$, which are $(1,1)$-spinor-$(1,0)$-tensors, as well as vierbeins (tetrads)
with one Lorentz and one space-time index, respectively.
One then encounters a specific identity for each particular index class, provided that all other indices are fully contracted.
\end{corollary}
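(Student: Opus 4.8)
The plan is to reduce the claim to the already-proved theorem by isolating one index class at a time. A scalar $S$ built from generalized tensor objects is by assumption a complete contraction, and since indices belonging to different classes (space-time, Lorentz, spinor) cannot be paired with one another, the contractions within each single class already come in matched upper/lower pairs, exactly as in the hypothesis of the theorem. So I would fix one index class---say the Lorentz indices carried by a collection of vierbeins and Dirac matrices---and regard every index of every other class appearing in $S$ as an inert \emph{spectator} that is merely carried along.

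Next I would re-examine the elementary identity underpinning the proof of the theorem, $\pfrac{S}{T\indices{^\alpha_\mu}}T\indices{^\alpha_\nu}=\pfrac{S}{T\indices{^\nu_\alpha}}T\indices{^\mu_\alpha}$, and observe that the only facts it uses are (i) $S$ depends on the chosen tensor slot only through a contracted pair of indices of the fixed class, and (ii) differentiating a constituent tensor with respect to itself in that class slot yields the Kronecker symbol of that class (times spectator deltas for the other slots). Both hold verbatim for Lorentz or spinor indices, because each class carries its own Kronecker delta and its own summation convention, and these are algebraically indistinguishable from the space-time case. Hence, for any matched pair of contracted class-indices, replacing them by the open indices $\mu,\nu$ and forming the corresponding derivative-times-tensor terms produces, after contracting with the class Kronecker deltas, two equal tensors with opposite signs, which cancel---literally the computation already carried out in the proof of the theorem, with all spectator indices left untouched.

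Summing over all contracted pairs of the fixed class then yields an identity of precisely the form~(\ref{eq:assertion1}), but with the derivatives and index replacements restricted to that class. Repeating the argument for each index class in turn gives one such identity per class, which is the assertion. The one point that needs a little care---and which I would spell out explicitly---is the bookkeeping for objects, such as vierbeins, that carry indices in two classes at once: when one differentiates with respect to such an object and opens an index of the chosen class, the index of the other class remains contracted and simply rides along as a spectator, contributing nothing new to the cancellation. This is the same phenomenon already illustrated by the tensor-product example following the theorem, where $C_\beta$ and $C_\eta$ play the role of spectators in the $A$- and $B$-terms. I expect this spectator bookkeeping, rather than any genuinely new idea, to be the only mild obstacle.
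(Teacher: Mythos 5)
Your proposal is correct and follows essentially the same route as the paper: the paper's own (very terse) proof likewise treats each index class separately, noting that the pairwise cancellation of the theorem applies verbatim within the chosen class while the fully contracted indices of the other classes ride along as spectators and contribute nothing. Your extra care about objects carrying indices of two classes (vierbeins, Dirac matrices) is exactly the bookkeeping the paper leaves implicit and later illustrates in Sects.~\ref{sec:dirac-all} and~\ref{sec:hilbert-tetr}.
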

\begin{proof}
For each fully contracted index group, all terms on the left-hand side of Eq.~(\ref{eq:assertion1}) cancel.
For the not fully contracted indices, Eq.~(\ref{eq:assertion1}) applies.
\end{proof}
Examples are provided in Sects.~\ref{sec:dirac-all} and~\ref{sec:hilbert-tetr}.

\subsection{Relative scalar built from relative tensors}
\begin{corollary}
Let a \emph{relative scalar of weight} $w$---denoted by $\tilde{S}=S{\left(\sqrt{-g}\right)}^w\in\RB$---be
given as a function of the metric $g_{\mu\nu}$ and a tensor
$T\indices{^{\xi_{1}\ldots\xi_{n}}_{\eta_{1}\ldots\eta_{m}}}$ of rank $(n,m)$.
Then the following identity holds for $(n-m)/2\in\ZB$:
\begin{align}
&+\pfrac{\tilde{S}}{g_{\mu\beta}}g_{\nu\beta}+\pfrac{\tilde{S}}{g_{\beta\mu}}g_{\beta\nu}
-\pfrac{\tilde{S}}{T\indices{^{\nu \xi_{2}\ldots\xi_{n}}_{\eta_{1}\ldots\eta_{m}}}}
T\indices{^{\mu \xi_{2}\ldots\xi_{n}}_{\eta_{1}\ldots\eta_{m}}}\nonumber\\
&-\ldots-\pfrac{\tilde{S}}{T\indices{^{\xi_{1}\ldots\xi_{n-1} \nu}_{\eta_{1}\ldots\eta_{m}}}}
T\indices{^{\xi_{1}\ldots\xi_{n-1} \mu}_{\eta_{1}\ldots\eta_{m}}}\nonumber\\
&+\pfrac{\tilde{S}}{T\indices{^{\xi_{1}\ldots\xi_{n}}_{\mu \eta_{2}\ldots\eta_{m}}}}
T\indices{^{\xi_{1}\ldots\xi_{n}}_{\nu \eta_{2}\ldots\eta_{m}}}\nonumber\\
&+\ldots+\pfrac{\tilde{S}}{T\indices{^{\xi_{1}\ldots\xi_{n}}_{\eta_{1}\ldots\eta_{m-1} \mu}}}
T\indices{^{\xi_{1}\ldots\xi_{n}}_{\eta_{1}\ldots\eta_{m-1} \nu}}\equiv\delta_{\nu}^{\mu} w \tilde{S}.
\label{eq:assertion2}
\end{align}
\end{corollary}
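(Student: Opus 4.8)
The plan is to write $\tilde S = S\,(\sqrt{-g})^{w}$, where $S=\tilde S/(\sqrt{-g})^{w}$ is an \emph{absolute} scalar depending on the metric $g_{\mu\nu}$ and on $T\indices{^{\xi_1\ldots\xi_n}_{\eta_1\ldots\eta_m}}$, and to apply the Leibniz rule to the left-hand side of Eq.~(\ref{eq:assertion2}), which is nothing but the differential operator of Eq.~(\ref{eq:assertion1a}) evaluated on $\tilde S$. First I would note that $(\sqrt{-g})^{w}$ is independent of $T$, so each $T$-derivative term merely picks up the factor $(\sqrt{-g})^{w}$, i.e.\ $\partial\tilde S/\partial T\indices{^{\ldots}_{\ldots}}=(\sqrt{-g})^{w}\,\partial S/\partial T\indices{^{\ldots}_{\ldots}}$; each of the two metric-derivative terms, by contrast, splits into one piece carrying $(\sqrt{-g})^{w}\,\partial S/\partial g_{\ldots}$ and one piece carrying $S\,\partial (\sqrt{-g})^{w}/\partial g_{\ldots}$.

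Next I would collect all contributions proportional to the prefactor $(\sqrt{-g})^{w}$: they reconstruct exactly the left-hand side of Eq.~(\ref{eq:assertion1a}) applied to the absolute scalar $S$, and since $(n-m)/2\in\ZB$ this combination vanishes identically. What survives is the single metric contribution $S\left(\pfrac{(\sqrt{-g})^{w}}{g_{\mu\beta}}\,g_{\nu\beta}+\pfrac{(\sqrt{-g})^{w}}{g_{\beta\mu}}\,g_{\beta\nu}\right)$. Using the elementary determinant variation $\pfrac{(-g)}{g_{\mu\nu}}=(-g)\,g^{\mu\nu}$ (Jacobi's formula, equivalently $\delta\ln\det g_{\alpha\beta}=g^{\mu\nu}\delta g_{\mu\nu}$) one obtains $\pfrac{(\sqrt{-g})^{w}}{g_{\mu\nu}}=\frac{w}{2}(\sqrt{-g})^{w}\,g^{\mu\nu}$, so that each of the two terms contracts a factor $g^{\mu\beta}$ (resp.\ $g^{\beta\mu}$) against $g_{\nu\beta}$ (resp.\ $g_{\beta\nu}$) and yields $\frac{w}{2}(\sqrt{-g})^{w}\delta_{\nu}^{\mu}$. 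Their sum is $w\,(\sqrt{-g})^{w}\delta_{\nu}^{\mu}$, and multiplication by $S$ gives $w\,\tilde S\,\delta_{\nu}^{\mu}$, which is precisely the asserted right-hand side of Eq.~(\ref{eq:assertion2}).

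I do not expect a genuine obstacle here. The two points that need a little care are, first, recognising that the operator on the left of Eq.~(\ref{eq:assertion2}) is literally the operator of Eq.~(\ref{eq:assertion1a}), so that splitting it by Leibniz and invoking that identity for $S$ is legitimate; and second, keeping the sign bookkeeping straight—the metric terms enter with a plus sign (like lower indices), which is exactly why the two leftover determinant contributions add rather than cancel. One should also state explicitly that $S=\tilde S/(\sqrt{-g})^{w}$ is an absolute scalar satisfying the hypotheses of Eq.~(\ref{eq:assertion1a}), and note that the determinant identity used above is the elementary one from linear algebra, not the determinant example derived later from the present corollary, so no circularity is incurred.
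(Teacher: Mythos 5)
Your proposal is correct and follows essentially the same route as the paper: both decompose $\tilde S=S(\sqrt{-g})^{w}$ via the Leibniz rule, invoke the absolute-scalar identity~(\ref{eq:assertion1a}) for $S$, and use the elementary determinant derivative $\partial\sqrt{-g}/\partial g_{\mu\beta}=\onehalf g^{\beta\mu}\sqrt{-g}$ to produce the term $\delta_{\nu}^{\mu}\,w\,\tilde S$; the paper merely presents the bookkeeping in the reverse direction (multiplying~(\ref{eq:assertion1a}) by $(\sqrt{-g})^{w}$ and adding $\delta_{\nu}^{\mu}w\tilde S$ to both sides). Your explicit remarks on non-circularity of the determinant formula and on the sign of the metric terms are consistent with, and slightly more careful than, the paper's own wording.
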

\begin{proof}
Multiply Eq.~(\ref{eq:assertion1a}) with ${\left(\sqrt{-g}\right)}^w$, add $\delta_{\nu}^{\mu} w \tilde{S}$
on both sides of the identity, and combine the appropriate terms on the left-hand side.
From the definition of the determinant, the general rule for the derivative of the determinant of the covariant metric
with respect to a component of the metric is obtained as:
\begin{equation}\label{eq:det-deri}
\pfrac{\sqrt{-g}}{g_{\mu\beta}}=\onehalf g^{\beta\mu}\sqrt{-g}.
\end{equation}
One then gets:
\begin{align*}
&\hphantom{=}\left(\pfrac{S}{g_{\mu\beta}}g_{\nu\beta}+\pfrac{S}{g_{\beta\mu}}g_{\beta\nu}
+\delta_\nu^\mu w S\right){\left(\sqrt{-g}\right)}^w\\
&=\pfrac{\tilde{S}}{g_{\mu\beta}}g_{\nu\beta}+\pfrac{\tilde{S}}{g_{\beta\mu}}g_{\beta\nu}.
\end{align*}
\end{proof}
Equation~(\ref{eq:assertion2}) is obviously the analogue of Euler's theorem on homogeneous functions in the realm of tensor calculus.
The relative scalar $\tilde{S}=S{\left(\sqrt{-g}\right)}^w$ of weight $w$ may in particular be the tensor product
of some relative tensors of lower ranks and weights.
The weight $w$ of $\tilde{S}$ is then the \emph{sum} of the weights of the relative tensors.

As a direct application, the relation~(\ref{eq:det-deri}) for the derivative of
the metric's determinant may be recovered using \eqref{eq:assertion2}.
\begin{eexample}
The components of the covariant metric tensor field $g_{\mu\nu}(x)$ transform under the transition $x\mapsto X$ of the space-time coordinates as:
\begin{equation*}
g_{\mu\nu}(X)=\pfrac{x^\alpha}{X^\mu} g_{\alpha\beta}(x)\pfrac{x^\beta}{X^\nu},
\end{equation*}
which gives for the determinant $g\equiv\left|g_{\alpha\beta}(x)\right|$:
\begin{equation*}
\left|g_{\mu\nu}(X)\right|=\left|g_{\alpha\beta}(x)\right|{\left|\pfrac{x}{X}\right|}^2.
\end{equation*}
The determinant $g$ of the covariant metric tensor thus transforms as a \emph{relative scalar} of weight $w=2$.
According to the general form of the identity for relative scalars of weight $w$ from Eq.~(\ref{eq:assertion2}),
we get for the relative scalar $g$ due to the symmetry $g_{\beta\alpha}=g_{\alpha\beta}$:
\begin{equation}\label{eq:metric-identity}
\pfrac{g}{g_{\beta\mu}}g_{\beta\nu}+\pfrac{g}{g_{\mu\beta}}g_{\nu\beta}\equiv2 \delta_\nu^\mu g\quad\Rightarrow\quad
\pfrac{g}{g_{\beta\mu}}g_{\beta\alpha}\equiv\delta_\alpha^\mu g.
\end{equation}
Contracting~(\ref{eq:metric-identity}) with the inverse metric $g^{\alpha\nu}$ reproduces the derivative of the determinant
$g$ of the covariant metric with respect to the component $g_{\nu\mu}$ of the metric from Eq.~(\ref{eq:det-deri}),
\begin{equation*}
\pfrac{g}{g_{\nu\mu}}\equiv g^{\mu\nu} g,
\end{equation*}
and thus for the negative square root of $g$
\begin{equation}\label{eq:metric-identity2}
\pfrac{\sqrt{-g}}{g_{\nu\mu}}\equiv\onehalf g^{\mu\nu}\sqrt{-g},\qquad
\pfrac{\sqrt{-g}}{g^{\nu\mu}}\equiv-\onehalf g_{\mu\nu}\sqrt{-g}.
\end{equation}
\end{eexample}
\section{Physical applications}\label{sec:applications}
\subsection{Lagrangian densities}
In the following applications various relativistic field theories are considered where the underlying space-time is not restricted
to the Minkowski geometry but takes general curvilinear geometries into account.
In that case the invariance of the action integral requires that the Lagrangian is a scalar density rather than a density.
This is accomplished by multiplying it with $\sqrt{-g}$.
All tensors forming $\tilde{\LCd}$ are understood as \emph{tensor fields}, taken at the same space-time event~$x$.
\begin{corollary}
For a \emph{scalar density} Lagrangian $\tilde{\LCd}$, i.e.\ for a relative scalar of weight $w=1$,
the identity~(\ref{eq:assertion2}) gives for $(n-m)/2\in\ZB$:
\begin{align}
&-\pfrac{\tilde{\LCd}}{g^{\nu\beta}}g^{\mu\beta}-\pfrac{\tilde{\LCd}}{g^{\beta\nu}}g^{\beta\mu}
-\pfrac{\tilde{\LCd}}{T\indices{^{\nu \xi_{2}\ldots\xi_{n}}_{\eta_{1}\ldots\eta_{m}}}}
T\indices{^{\mu \xi_{2}\ldots\xi_{n}}_{\eta_{1}\ldots\eta_{m}}}\nonumber\\
&-\ldots-\pfrac{\tilde{\LCd}}{T\indices{^{\xi_{1}\ldots\xi_{n-1} \nu}_{\eta_{1}\ldots\eta_{m}}}}
T\indices{^{\xi_{1}\ldots\xi_{n-1} \mu}_{\eta_{1}\ldots\eta_{m}}}\nonumber\\
&+\pfrac{\tilde{\LCd}}{T\indices{^{\xi_{1}\ldots\xi_{n}}_{\mu \eta_{2}\ldots\eta_{m}}}}
T\indices{^{\xi_{1}\ldots\xi_{n}}_{\nu \eta_{2}\ldots\eta_{m}}}\nonumber\\
&+\ldots+\pfrac{\tilde{\LCd}}{T\indices{^{\xi_{1}\ldots\xi_{n}}_{\eta_{1}\ldots\eta_{m-1} \mu}}}
T\indices{^{\xi_{1}\ldots\xi_{n}}_{\eta_{1}\ldots\eta_{m-1} \nu}}
\equiv\delta_{\nu}^{\mu} \tilde{\LCd}.
\label{eq:assertion3}
\end{align}
\end{corollary}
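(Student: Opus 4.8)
The plan is to obtain \eqref{eq:assertion3} as the special case $w=1$ of the already-established identity \eqref{eq:assertion2}, after trading the derivatives with respect to the \emph{covariant} metric $g_{\mu\nu}$ for derivatives with respect to the \emph{contravariant} metric $g^{\mu\nu}$. First I set $w=1$ in \eqref{eq:assertion2}, so that the right-hand side becomes $\delta_\nu^\mu\tilde{\LCd}$ and all the $T$-terms are already in the form displayed in \eqref{eq:assertion3}. Only the pair of metric terms $\pfrac{\tilde{\LCd}}{g_{\mu\beta}}g_{\nu\beta}+\pfrac{\tilde{\LCd}}{g_{\beta\mu}}g_{\beta\nu}$ remains to be rewritten.

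Second, I regard $\tilde{\LCd}$ as a function of $g^{\mu\nu}$ composed with the map $g^{\mu\nu}\mapsto g_{\mu\nu}$ and apply the chain rule, $\pfrac{\tilde{\LCd}}{g_{\mu\nu}}=\pfrac{\tilde{\LCd}}{g^{\rho\sigma}}\,\pfrac{g^{\rho\sigma}}{g_{\mu\nu}}$. Differentiating the defining relation $g^{\rho\sigma}g_{\sigma\tau}=\delta^\rho_\tau$ with respect to $g_{\mu\nu}$ gives $\pfrac{g^{\rho\sigma}}{g_{\mu\nu}}=-g^{\rho\mu}g^{\sigma\nu}$, hence $\pfrac{\tilde{\LCd}}{g_{\mu\nu}}=-g^{\rho\mu}g^{\sigma\nu}\,\pfrac{\tilde{\LCd}}{g^{\rho\sigma}}$. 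Contracting this with $g_{\nu\beta}$ and using $g^{\sigma\beta}g_{\nu\beta}=\delta^\sigma_\nu$ together with the symmetry of the metric in a final relabelling, I get $\pfrac{\tilde{\LCd}}{g_{\mu\beta}}g_{\nu\beta}=-\pfrac{\tilde{\LCd}}{g^{\nu\beta}}g^{\mu\beta}$, and likewise $\pfrac{\tilde{\LCd}}{g_{\beta\mu}}g_{\beta\nu}=-\pfrac{\tilde{\LCd}}{g^{\beta\nu}}g^{\beta\mu}$.

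Third, I substitute these two identities into the $w=1$ version of \eqref{eq:assertion2}. The two metric terms flip sign and change from covariant to contravariant components, reproducing precisely the first two terms of \eqref{eq:assertion3}, while the $T$-terms and the right-hand side $\delta_\nu^\mu\tilde{\LCd}$ are unchanged; this yields \eqref{eq:assertion3}.

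The only delicate point — and hence the ``hard part'', modest as it is — is the consistent treatment of the symmetric metric components when passing between $\partial/\partial g_{\mu\nu}$ and $\partial/\partial g^{\mu\nu}$: one must apply the (anti)symmetrization convention uniformly on both sides so that the overall sign and the placement of the free indices $\mu,\nu$ come out as stated. Everything else is an immediate specialization of \eqref{eq:assertion2}.
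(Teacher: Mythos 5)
Your proposal is correct and follows the route the paper implicitly intends (it states the corollary without a separate proof): set $w=1$ in~(\ref{eq:assertion2}) and trade $\partial/\partial g_{\mu\nu}$ for $\partial/\partial g^{\mu\nu}$ via $\partial g^{\rho\sigma}/\partial g_{\mu\nu}=-g^{\rho\mu}g^{\sigma\nu}$, consistent with the paper's relation~(\ref{eq:metric-identity2}). One harmless nit: termwise the conversion actually yields $\pfrac{\tilde{\LCd}}{g_{\mu\beta}}g_{\nu\beta}=-\pfrac{\tilde{\LCd}}{g^{\beta\nu}}g^{\beta\mu}$ (the opposite index ordering from the one you wrote), but since both orderings are summed in~(\ref{eq:assertion2}) and~(\ref{eq:assertion3}) the identity follows exactly as you conclude.
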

\subsection{Klein-Gordon Lagrangian\label{ex:KG-ham}}
The Klein-Gordon Lagrangian density $\tilde{\LCd}_{\mathrm{KG}}=\LCd_{\mathrm{KG}}\sqrt{-g}$
for a massive \emph{complex} scalar field $\phi(x)$ is given by (see e.g.~\cite{greiner96}):
\begin{align}
\tilde{\LCd}_{\mathrm{KG}}&\left(\phi,\bar{\phi},\partial_\mu\phi,\partial_\nu\bar{\phi},g^{\mu\nu}\right)
=\left(\pfrac{\bar{\phi}}{x^{\alpha}}\pfrac{\phi}{x^{\beta}}g^{\alpha\beta}-m^2\bar{\phi} \phi\right)\sqrt{-g}\nonumber\\
&=\left[\frac{1}{2}\left(\pfrac{\bar{\phi}}{x^{\alpha}}\pfrac{\phi}{x^{\beta}}
+\pfrac{\bar{\phi}}{x^{\beta}}\pfrac{\phi}{x^{\alpha}}\right)g^{\alpha\beta}-m^2\bar{\phi} \phi\right]\sqrt{-g}.
\label{eq:KG-Lagrangian-complex}
\end{align}
In order to set up the pertaining identity~(\ref{eq:assertion2}), we first calculate the required derivatives of $\tilde{\LCd}_{\mathrm{KG}}$.
This yields with the derivative Eq.~(\ref{eq:metric-identity2}) of the determinant $g$ of the covariant metric $g_{\mu\nu}$ with respect to the contravariant metric:
\allowdisplaybreaks
\begin{align*}
\pfrac{\tilde{\LCd}_{\mathrm{KG}}}{g^{\nu\beta}}g^{\mu\beta}&+\pfrac{\tilde{\LCd}_{\mathrm{KG}}}{g^{\alpha\nu}}g^{\alpha\mu}\\
&=\left(\pfrac{\bar{\phi}}{x^{\nu}}\pfrac{\phi}{x^{\beta}}
+\pfrac{\bar{\phi}}{x^{\beta}}\pfrac{\phi}{x^{\nu}}\right)g^{\mu\beta}\sqrt{-g}-\delta_\nu^\mu \tilde{\LCd}_{\mathrm{KG}}\\
\pfrac{\bar{\phi}}{x^\nu}\pfrac{\tilde{\LCd}_{\mathrm{KG}}}{\left(\pfrac{\bar{\phi}}{x^\mu}\right)}
&=\frac{1}{2}\pfrac{\bar{\phi}}{x^{\nu}}\left(\pfrac{\phi}{x^{\beta}}g^{\mu\beta}+\pfrac{\phi}{x^{\alpha}}g^{\alpha\mu}\right)\sqrt{-g}\\
&=\pfrac{\bar{\phi}}{x^{\nu}}\pfrac{\phi}{x^{\beta}}g^{\mu\beta}\sqrt{-g}\\
\pfrac{\tilde{\LCd}_{\mathrm{KG}}}{\left(\pfrac{\phi}{x^\mu}\right)}\pfrac{\phi}{x^\nu}
&=\frac{1}{2}\left(\pfrac{\bar{\phi}}{x^{\alpha}}g^{\alpha\mu}+\pfrac{\bar{\phi}}{x^{\beta}}g^{\mu\beta}\right)\pfrac{\phi}{x^{\nu}}\sqrt{-g}\\
&=\pfrac{\bar{\phi}}{x^{\beta}}\pfrac{\phi}{x^{\nu}}g^{\mu\beta}\sqrt{-g}.
\end{align*}
\allowdisplaybreaks[0]
The particular identity for the Lagrangian density~(\ref{eq:KG-Lagrangian-complex}) thus writes:
\begin{equation}
-\pfrac{\tilde{\LCd}_{\mathrm{KG}}}{g^{\nu\beta}}g^{\mu\beta}-\pfrac{\tilde{\LCd}_{\mathrm{KG}}}{g^{\beta\nu}}g^{\beta\mu}
+\pfrac{\bar{\phi}}{x^\nu}\pfrac{\tilde{\LCd}_{\mathrm{KG}}}{\left(\pfrac{\bar{\phi}}{x^\mu}\right)}
+\pfrac{\tilde{\LCd}_{\mathrm{KG}}}{\left(\pfrac{\phi}{x^\mu}\right)}\pfrac{\phi}{x^\nu}\equiv\delta_\nu^\mu \tilde{\LCd}_{\mathrm{KG}}
\end{equation}
It can be expressed equivalently as the identity for the system's metric and canonical energy-momentum tensor densities,
$\tilde{T}\indices{^\mu_\nu}$ and $\tilde{\theta}\indices{^\mu_\nu}$:
\begin{equation}
\tilde{T}\indices{^\mu_\nu}\!\!=2\pfrac{\tilde{\LCd}_{\mathrm{KG}}}{g^{\nu\beta}}g^{\mu\beta}\equiv
\pfrac{\bar{\phi}}{x^\nu}\pfrac{\tilde{\LCd}_{\mathrm{KG}}}{\left(\pfrac{\bar{\phi}}{x^\mu}\right)}
+\pfrac{\tilde{\LCd}_{\mathrm{KG}}}{\left(\pfrac{\phi}{x^\mu}\right)}\pfrac{\phi}{x^\nu}-\delta_\nu^\mu \tilde{\LCd}_{\mathrm{KG}}\!=\tilde{\theta}\indices{^\mu_\nu}
\end{equation}
\subsection{Proca Lagrangian\label{ex:proca-ham}}
In a dynamic space-time, the Proca Lagrangian is defined by (see e.g.~\cite{greiner96}):
\begin{equation*}
\tilde{\LCd}_{\mathrm{P}}=\left(-\quarter f_{\alpha\beta} f_{\xi\eta} g^{\alpha\xi} g^{\beta\eta}
+\onehalf m^{2}a_{\alpha} a_{\beta} g^{\alpha\beta}\right)\sqrt{-g},
\end{equation*}
with
\begin{equation*}
f_{\alpha\beta}\equiv\pfrac{a_\beta}{x^\alpha}-\pfrac{a_\alpha}{x^\beta}.
\end{equation*}
The scalar density $\tilde{\LCd}_{\mathrm{P}}$ represents in this case a \emph{quadratic}
function of the field tensor $f_{\alpha\beta}$ and the field vector $a_\alpha$.
Nonetheless, the identity~(\ref{eq:assertion3}) for the scalar density $\tilde{\LCd}_{\mathrm{P}}$ follows in the usual form as:
\begin{equation}\label{eq:proca-identity}
-\pfrac{\tilde{\LCd}_{\mathrm{P}}}{g^{\nu\beta}}g^{\mu\beta}-\pfrac{\tilde{\LCd}_{\mathrm{P}}}{g^{\beta\nu}}g^{\beta\mu}
+\pfrac{\tilde{\LCd}_{\mathrm{P}}}{f_{\mu\beta}}f_{\nu\beta}
+\pfrac{\tilde{\LCd}_{\mathrm{P}}}{f_{\beta\mu}}f_{\beta\nu}+\pfrac{\tilde{\LCd}_{\mathrm{P}}}{a_\mu}a_\nu\equiv\delta_\nu^\mu\tilde{\LCd}_{\mathrm{P}}.
\end{equation}
Rewriting the identity~(\ref{eq:proca-identity}) as a correlation of metric and canonical energy-momentum tensor densities,
$\tilde{T}\indices{^\mu_\nu}$ and $\tilde{\theta}\indices{^\mu_\nu}$,
we observe that these tensors do \emph{not} coincide for the Proca system:
\begin{align}
\tilde{T}\indices{^\mu_\nu}=2\pfrac{\tilde{\LCd}_{\mathrm{P}}}{g^{\beta\nu}}g^{\beta\mu}
&\equiv\pfrac{\tilde{\LCd}_{\mathrm{P}}}{f_{\beta\mu}}f_{\beta\nu}-\delta_\nu^\mu\tilde{\LCd}_{\mathrm{P}}
+\pfrac{\tilde{\LCd}_{\mathrm{P}}}{f_{\mu\beta}}f_{\nu\beta}+\pfrac{\tilde{\LCd}_{\mathrm{P}}}{a_\mu}a_\nu\nonumber\\
&\equiv\tilde{\theta}\indices{^\mu_\nu}+\pfrac{\tilde{\LCd}_{\mathrm{P}}}{f_{\mu\beta}}f_{\nu\beta}+\pfrac{\tilde{\LCd}_{\mathrm{P}}}{a_\mu}a_\nu.
\end{align}
\subsection{Regularized Dirac Lagrangian}\label{sec:dirac-all}
\subsubsection{Identity for the metric indices}\label{sec:dirac-metr}
The regularized Dirac Lagrangian $\LCd_{\mathrm{D}}\big(\psi,\partial\psi,\bar{\psi},\partial\bar{\psi},\bgamma^\mu\big)$ \citep{gasiorowicz66,struckmeier08,struckmeier21a},
constructed upon the conventional Dirac equation writes
\begin{equation}\label{ld-regular}
\LCd_{\mathrm{D}}=\frac{\rmi}{2}\left(\bar{\psi} \bgamma^{\alpha}\pfrac{\psi}{x^{\alpha}}
-\pfrac{\bar{\psi}}{x^{\alpha}}\bgamma^{\alpha}\psi\right)-m \bar{\psi}\psi
+\frac{\rmi}{3M}\pfrac{\bar{\psi}}{x^{\alpha}} \bsigma^{\alpha\beta}\pfrac{\psi}{x^{\beta}},
\end{equation}
wherein the $(1,1)$-spinor-$(2,0)$-tensor field $\bsigma^{\alpha\beta}$ is defined
as the \emph{commutator} of the matrix product $\bgamma^{\alpha}\bgamma^{\nu}$:
\begin{equation}\label{eq:sigma-def}
\bsigma^{\alpha\nu}=\frac{\rmi}{2}\left(\bgamma^{\alpha}\bgamma^{\nu}-\bgamma^{\nu}\bgamma^{\alpha}\right)
\;\Leftrightarrow\;\sigma\indices{^{a}_{b}^{\alpha\nu}}=\frac{\rmi}{2}
\left(\gamma\indices{^{a}_{c}^{\alpha}} \gamma\indices{^{c}_{b}^{\nu}}-
\gamma\indices{^{a}_{c}^{\nu}} \gamma\indices{^{c}_{b}^{\alpha}}\right).
\end{equation}
With an explicit notation of the spinor indices as lower case Latin letters,
the Dirac Lagrangian takes on the form:
\begin{align}
\LCd_{\mathrm{D}}=\frac{\rmi}{2}&\left(\bar{\psi}_a \gamma\indices{^a_c^\alpha}\pfrac{\psi^c}{x^{\alpha}}
-\pfrac{\bar{\psi}_a}{x^{\alpha}}\gamma\indices{^a_c^\alpha}\psi^c\right)-m \bar{\psi}_a\psi^a\nonumber\\
&+\frac{\rmi}{3M}\pfrac{\bar{\psi}_a}{x^{\alpha}} \sigma\indices{^a_c^{\alpha\beta}}\pfrac{\psi^c}{x^{\beta}}.
\label{ld-regular-expl}
\end{align}
With the spinor indices fully contracted, the identity~(\ref{eq:assertion1}) with respect to
the metric indices for the scalar quantity $\LCd_{\mathrm{D}}$ is then given by:
\begin{equation}\label{ld-identity}
\pfrac{\LCd_{\mathrm{D}}}{\left(\pfrac{\psi^c}{x^\mu}\right)}\pfrac{\psi^c}{x^\nu}
+\pfrac{\bar{\psi}_a}{x^\nu}\pfrac{\LCd_{\mathrm{D}}}{\left(\pfrac{\bar{\psi}_a}{x^\mu}\right)}
-\pfrac{\LCd_{\mathrm{D}}}{\gamma\indices{^a_c^\nu}}\gamma\indices{^a_c^\mu}\equiv0.
\end{equation}
Equation~(\ref{ld-identity}) is verified by direct calculation:
\begin{align*}
\pfrac{\LCd_{\mathrm{D}}}{\left(\pfrac{\psi^c}{x^\mu}\right)}\pfrac{\psi^c}{x^\nu}
&=\frac{\rmi}{2}\bar{\psi}_a \gamma\indices{^a_c^\mu}\pfrac{\psi^c}{x^{\nu}}
+\frac{\rmi}{3M}\pfrac{\bar{\psi}_a}{x^{\alpha}} \sigma\indices{^a_c^{\alpha\mu}}\pfrac{\psi^c}{x^{\nu}}\\
\pfrac{\bar{\psi}_a}{x^\nu}\pfrac{\LCd_{\mathrm{D}}}{\left(\pfrac{\bar{\psi}_a}{x^\mu}\right)}
&=-\frac{\rmi}{2}\pfrac{\bar{\psi}_a}{x^\nu}\gamma\indices{^a_c^\mu}\psi^c
+\frac{\rmi}{3M}\pfrac{\bar{\psi}_a}{x^\nu}\sigma\indices{^a_c^{\mu\beta}}\pfrac{\psi^c}{x^{\beta}}\\
-\pfrac{\LCd_{\mathrm{D}}}{\gamma\indices{^a_c^\nu}}\gamma\indices{^a_c^\mu}
&=-\frac{\rmi}{2}\left(\bar{\psi}_a \gamma\indices{^a_c^\mu}\pfrac{\psi^c}{x^{\nu}}
-\pfrac{\bar{\psi}_a}{x^{\nu}}\gamma\indices{^a_c^\mu}\psi^c\right)\\
&\quad-\frac{\rmi}{3M}\left(\pfrac{\bar{\psi}_a}{x^{\nu}} \sigma\indices{^a_c^{\mu\beta}}\pfrac{\psi^c}{x^{\beta}}
+\pfrac{\bar{\psi}_a}{x^{\alpha}} \sigma\indices{^a_c^{\alpha\mu}}\pfrac{\psi^c}{x^{\nu}}\right),
\end{align*}
which obviously sums up to zero.

The identity~(\ref{eq:assertion3}) for the covariant scalar \emph{density} $\tilde{\LCd}_{\mathrm{D}}$ is then
\begin{equation*}
\pfrac{\tilde{\LCd}_{\mathrm{D}}}{\left(\pfrac{\psi^c}{x^\mu}\right)}\pfrac{\psi^c}{x^\nu}
+\pfrac{\bar{\psi}_a}{x^\nu}\pfrac{\tilde{\LCd}_{\mathrm{D}}}{\left(\pfrac{\bar{\psi}_a}{x^\mu}\right)}
-\pfrac{\tilde{\LCd}_{\mathrm{D}}}{\gamma\indices{^a_c^\nu}}\gamma\indices{^a_c^\mu}\equiv\delta_\nu^\mu \tilde{\LCd}_{\mathrm{D}},
\end{equation*}
hence, rearranging the terms and skipping the spinor indices:
\begin{align}
\tilde{\theta}\indices{^\mu_\nu}&\equiv
\pfrac{\tilde{\LCd}_{\mathrm{D}}}{\left(\pfrac{\psi}{x^\mu}\right)}\pfrac{\psi}{x^\nu}
+\pfrac{\bar{\psi}}{x^\nu}\pfrac{\tilde{\LCd}_{\mathrm{D}}}{\left(\pfrac{\bar{\psi}}{x^\mu}\right)}
-\delta_\nu^\mu \tilde{\LCd}_{\mathrm{D}}\equiv
\Tr\left\{\pfrac{\tilde{\LCd}_{\mathrm{D}}}{\bgamma\indices{^\nu}}\bgamma\indices{^\mu}\right\}\nonumber\\
&\equiv\tilde{T}\indices{^\mu_\nu}.
\label{ld-identity2}
\end{align}
Equation~(\ref{ld-identity2}) states that the \emph{canonical} energy-momentum tensor density $\tilde{\theta}\indices{^\mu_\nu}$
coincides for the Dirac Lagrangian with the \emph{metric} energy-momentum tensor $\tilde{T}\indices{^\mu_\nu}$.
The explicit covariant representation of these tensors is not symmetric:
\begin{align}
T\indices{_\mu_\nu}=\theta\indices{_\mu_\nu}
&=\frac{\rmi}{2}\left(\bar{\psi} \bgamma\indices{_\mu}\pfrac{\psi}{x^{\nu}}
-\pfrac{\bar{\psi}}{x^{\nu}}\bgamma\indices{_\mu}\psi\right)-g_{\mu\nu}\LCd_{\mathrm{D}}\nonumber\\
&\quad+\frac{\rmi}{3M}g_{\mu\alpha}\left(\pfrac{\bar{\psi}}{x^{\nu}} \bsigma\indices{^\alpha^\beta}\pfrac{\psi}{x^{\beta}}
+\pfrac{\bar{\psi}}{x^{\beta}} \bsigma\indices{^\beta^\alpha}\pfrac{\psi}{x^{\nu}}\right).
\label{dirac-emt}
\end{align}
and thus also provides a source for torsion of space-time in a generalized Einstein-type equation.
\section{Theories of Gravity}\label{sec:gravity}
In the following we consider a generalized, metric compatible space-time with Riemann-Cartan geometry,
where the connection is not necessarily symmetric and torsion of space-time is thus admitted.
The additional degrees of freedom carried by the torsion enforce to treat connection and metric as independent fields (Palatini approach).
The metric and torsion dependent curvature tensor is then called the Riemann-Cartan tensor.
In some cases we will also, in addition, take non-metricity into account, or also restrict
the geometry to the torsion-free Riemannian one with the Levi-Civita connection.
\subsection{Scalars depending on the Riemann-Cartan tensor and the metric}
\subsubsection{General gravity Lagrangian for metric compatibility}
Any \emph{relative} scalar Lagrangian $\tilde{\LCd}_\mathrm{Gr}(R,g)$ of weight $w=1$ built from the Riemann-Cartan curvature tensor
\begin{equation}\label{eq:riemann-tensor}
R\indices{^{\eta}_{\xi\beta\lambda}}=\pfrac{\Gamma\indices{^{\eta}_{\xi\lambda}}}{x^{\beta}}-
\pfrac{\Gamma\indices{^{\eta}_{\xi\beta}}}{x^{\lambda}}+
\Gamma\indices{^{\eta}_{\tau\beta}}\Gamma\indices{^{\tau}_{\xi\lambda}}-
\Gamma\indices{^{\eta}_{\tau\lambda}}\Gamma\indices{^{\tau}_{\xi\beta}}
\end{equation}
and the metric satisfies the identity~(\ref{eq:assertion3}), whose particular form for $\tilde{\LCd}_\mathrm{Gr}$ is given by:
\begin{align}
\pfrac{\tilde{\LCd}_\mathrm{Gr}}{g^{\nu\beta}}g^{\mu\beta}&+\pfrac{\tilde{\LCd}_\mathrm{Gr}}{g^{\beta\nu}}g^{\beta\mu}\equiv
-\pfrac{\tilde{\LCd}_\mathrm{Gr}}{R\indices{^{\nu}_{\alpha\beta\lambda}}}R\indices{^{\mu}_{\alpha\beta\lambda}}
+\pfrac{\tilde{\LCd}_\mathrm{Gr}}{R\indices{^{\eta}_{\mu\beta\lambda}}}R\indices{^{\eta}_{\nu\beta\lambda}}\nonumber\\
&\quad+\pfrac{\tilde{\LCd}_\mathrm{Gr}}{R\indices{^{\eta}_{\alpha\mu\lambda}}}R\indices{^{\eta}_{\alpha\nu\lambda}}
+\pfrac{\tilde{\LCd}_\mathrm{Gr}}{R\indices{^{\eta}_{\alpha\beta\mu}}}R\indices{^{\eta}_{\alpha\beta\nu}}-\delta_\nu^\mu \tilde{\LCd}_\mathrm{Gr}.
\label{eq:constraint2-lag}
\end{align}
As stated in Eq.~(\ref{eq:assertion2}) it is necessary and sufficient for the identity~(\ref{eq:constraint2-lag})
to hold that $\tilde{\LCd}_\mathrm{Gr}(R,g)$ is a scalar density function of the \emph{tensors} $R\indices{^{\nu}_{\alpha\beta\lambda}}$ and $g^{\beta\nu}$.
Whether or not a tensor may on its part be represented by underlying quantities---such as the affine connection or the metric
in the actual case---does not play any role for setting up the identity.
The fact that the Riemann-Cartan tensor actually possesses internal symmetries also does not modify the form of the identity.
Clearly, these symmetries will be reflected in a later analysis of the invariant.
For instance, with the Riemann-Cartan tensor~(\ref{eq:riemann-tensor}) being by its definition skew-symmetric in its last index pair,
the last two derivative terms of the identity~(\ref{eq:constraint2-lag}) are identical and can be merged into a single term.

The left-hand side of the identity~(\ref{eq:constraint2-lag}) can again be interpreted as the \emph{metric}
energy-momentum tensor $\tilde{T}\indices{^\mu_\nu}$ of the Lagrangian system $\tilde{\LCd}_\mathrm{Gr}(R,g)$, hence
\begin{equation}\label{eq:emt-metr-gen}
T\indices{^\mu_\nu}=\frac{1}{\sqrt{-g}}\left(\pfrac{\tilde{\LCd}_\mathrm{Gr}}{g^{\nu\beta}}g^{\mu\beta}+\pfrac{\tilde{\LCd}_\mathrm{Gr}}{g^{\beta\nu}}g^{\beta\mu}\right),
\end{equation}
whereas the corresponding \emph{canonical} energy-momentum tensor $\theta\indices{^\mu_\nu}$ is derived
for the corresponding \emph{absolute} scalar $\LCd_\mathrm{Gr}=\tilde{\LCd}_\mathrm{Gr}/\sqrt{-g}$ by means of Noether's theorem~\citep{struckmeier18a} as:
\begin{equation}\label{eq:emt-can-gen}
\vartheta\indices{^\mu_\nu}=2\pfrac{\LCd_\mathrm{Gr}}{R\indices{^{\eta}_{\alpha\beta\mu}}}R\indices{^{\eta}_{\alpha\beta\nu}}-\delta_\nu^\mu \LCd_\mathrm{Gr}.
\end{equation}
Notice that we use a modified notation for the canonical energy-momentum tensor for gravity to distinguish it for later convenience from that of matter.
This gives the relation between both tensors:
\begin{equation}\label{eq:emt-can-rel}
T\indices{^\mu_\nu}=\vartheta\indices{^\mu_\nu}
-R\indices{^{\mu}_{\alpha\beta\lambda}}\pfrac{\LCd_\mathrm{Gr}}{R\indices{^{\nu}_{\alpha\beta\lambda}}}
+\pfrac{\LCd_\mathrm{Gr}}{R\indices{^{\eta}_{\mu\beta\lambda}}}R\indices{^{\eta}_{\nu\beta\lambda}}.
\end{equation}
We remark that the above tensors were formally set up
in analogy to the energy-momentum tensors of the Klein-Gordon and Proca systems.
This is independent from interpreting any of the tensors~(\ref{eq:emt-metr-gen}) and~(\ref{eq:emt-can-gen})
as the \emph{physical} energy-momentum tensors of the gravitational field.
\subsubsection{General gravity Lagrangian}
For the case of a non-vanishing covariant derivative of the metric, hence for \mbox{$g\indices{^{\alpha\beta}_{;\mu}}\not\equiv0$},
the gravity Lagrangian also depends on the covariant derivative of the metric $\tilde{\LCd}_\mathrm{Gr}=\tilde{\LCd}_\mathrm{Gr}(R,g,\nabla g)$.
The identity~(\ref{eq:constraint2-lag}) is then amended by three additional terms that reflect the three indices of $g\indices{^\mu^\nu_{;\alpha}}$:
\begin{align}
&\pfrac{\tilde{\LCd}_\mathrm{Gr}}{g\indices{^\alpha^\beta_{;\mu}}}g\indices{^\alpha^\beta_{;\nu}}\!\!
-\pfrac{\tilde{\LCd}_\mathrm{Gr}}{g\indices{^\nu^\beta_{;\alpha}}}g\indices{^\mu^\beta_{;\alpha}}\!\!
-\pfrac{\tilde{\LCd}_\mathrm{Gr}}{g\indices{^\beta^\nu_{;\alpha}}}g\indices{^\beta^\mu_{;\alpha}}\!\!
-\pfrac{\tilde{\LCd}_\mathrm{Gr}}{g^{\nu\beta}}g^{\mu\beta}\!-\pfrac{\tilde{\LCd}_\mathrm{Gr}}{g^{\beta\nu}}g^{\beta\mu}\nonumber\\
&-\pfrac{\tilde{\LCd}_\mathrm{Gr}}{R\indices{^{\nu}_{\alpha\beta\lambda}}}\!R\indices{^{\mu}_{\alpha\beta\lambda}}\!
+\pfrac{\tilde{\LCd}_\mathrm{Gr}}{R\indices{^{\eta}_{\mu\beta\lambda}}}\!R\indices{^{\eta}_{\nu\beta\lambda}}\!
+\pfrac{\tilde{\LCd}_\mathrm{Gr}}{R\indices{^{\eta}_{\alpha\mu\lambda}}}\!R\indices{^{\eta}_{\alpha\nu\lambda}}\!
+\pfrac{\tilde{\LCd}_\mathrm{Gr}}{R\indices{^{\eta}_{\alpha\beta\mu}}}\!R\indices{^{\eta}_{\alpha\beta\nu}}\nonumber\\
&\equiv\delta_\nu^\mu \tilde{\LCd}_\mathrm{Gr}.
\label{eq:constraint2a-lag}
\end{align}
The \emph{metric} energy-momentum tensor is then generalized to:
\begin{align*}
T\indices{^\mu_\nu}=\frac{1}{\sqrt{-g}}&\left(\pfrac{\tilde{\LCd}_\mathrm{Gr}}{g^{\nu\beta}}g^{\mu\beta}+\pfrac{\tilde{\LCd}_\mathrm{Gr}}{g^{\beta\nu}}g^{\beta\mu}\right.\\
&+\pfrac{\tilde{\LCd}_\mathrm{Gr}}{g\indices{^{\nu\beta}_{;\alpha}}}g\indices{^{\mu\beta}_{;\alpha}}
+\pfrac{\tilde{\LCd}_\mathrm{Gr}}{g\indices{^{\beta\nu}_{;\alpha}}}g\indices{^{\beta\mu}_{;\alpha}}\bigg),
\end{align*}
whereas the \emph{canonical} energy-momentum tensor takes on the generalized form~(see Eq.~(48) of Struckmeier~et~al.~\cite{struckmeier18a}):
\begin{equation*}
\vartheta\indices{^\mu_\nu}=\pfrac{\LCd_\mathrm{Gr}}{g\indices{^{\alpha\beta}_{;\mu}}}g\indices{^{\alpha\beta}_{;\nu}}+
\pfrac{\LCd_\mathrm{Gr}}{R\indices{^{\eta}_{\alpha\mu\lambda}}}R\indices{^{\eta}_{\alpha\nu\lambda}}+
\pfrac{\LCd_\mathrm{Gr}}{R\indices{^{\eta}_{\alpha\beta\mu}}}R\indices{^{\eta}_{\alpha\beta\nu}}-\delta_\nu^\mu \LCd_\mathrm{Gr}.
\end{equation*}
From the identity~(\ref{eq:assertion3}), we again recover Eq.~(\ref{eq:emt-can-rel}) for the correlation of
the so-defined energy-momentum tensors.
Obviously, the correlation~(\ref{eq:emt-can-rel}) of these tensors holds independently of metric compatibility.
\subsubsection{Einstein-Hilbert-Cartan Lagrangian \texorpdfstring{$\tilde{\LCd}_{E}$}{}}\label{sec:ehc-Lag}
The Einstein-Hilbert-Cartan Lagrangian $\tilde{\LCd}_{E}$ is defined by the Ricci scalar density $\tilde{R}=R\sqrt{-g}$.
The latter emerges from the definition of the Riemann-Cartan curvature tensor from Eq.~(\ref{eq:riemann-tensor})
via its non-trivial contraction $R_{\xi\lambda}=R\indices{^{\eta}_{\xi\eta\lambda}}$,
followed by a contraction with the metric according to:
\begin{equation}\label{eq:ricci-scalar}
16\pi G \tilde{\LCd}_{E}=R\indices{^{\eta}_{\xi\eta\lambda}} g^{\xi\lambda} \sqrt{-g}=R_{\xi\lambda} g^{\xi\lambda} \sqrt{-g}=\tilde{R}.
\end{equation}
Note that the Ricci tensor $R_{\xi\lambda}$ is \emph{not} symmetric for non-symmetric connection coefficients
$\Gamma\indices{^{\eta}_{\xi\lambda}}\neq\Gamma\indices{^{\eta}_{\lambda\xi}}$.
For the scalar density $\tilde{\LCd}_{E}$, the left-hand side of the general equation~(\ref{eq:constraint2-lag}),
hence the derivatives of $\tilde{R}$ with respect to the metric are:
\begin{subequations}\label{eq:Hilbert-deris}
\begin{equation}\label{eq:Hilbert-deri1}
\pfrac{\tilde{R}}{g^{\nu\beta}}g^{\mu\beta}+\pfrac{\tilde{R}}{g^{\beta\nu}}g^{\beta\mu}
=\tilde{R}\indices{_\nu^\mu}+\tilde{R}\indices{^\mu_\nu}-\delta_\nu^\mu \tilde{R}.
\end{equation}
The derivatives of $\tilde{R}$ with respect to the Riemann tensor follow as:
\begin{align}
\pfrac{\tilde{R}}{R\indices{^\nu_{\xi\beta\lambda}}}R\indices{^\mu_{\xi\beta\lambda}}
&=\delta^\eta_\nu\delta^\beta_\eta g^{\xi\lambda}\tilde{R}\indices{^\mu_{\xi\beta\lambda}}
=\tilde{R}\indices{^\mu^\lambda_{\nu\lambda}}=\tilde{R}\indices{^\mu_\nu}\label{eq:Hilbert-deri2}\\
\pfrac{\tilde{R}}{R\indices{^\eta_{\mu\beta\lambda}}}R\indices{^\eta_{\nu\beta\lambda}}
&=\delta_\xi^\mu\delta_\eta^\beta g^{\xi\lambda}\tilde{R}\indices{^\eta_{\nu\beta\lambda}}
=\tilde{R}\indices{^\beta_\nu_\beta^\mu}=\tilde{R}\indices{_\nu^\mu}\label{eq:Hilbert-deri3}\\
\pfrac{\tilde{R}}{R\indices{^\eta_{\xi\mu\lambda}}}R\indices{^\eta_{\xi\nu\lambda}}
&=\delta_\eta^\mu\hphantom{\delta_\eta^\beta} g^{\xi\lambda}\tilde{R}\indices{^\eta_{\xi\nu\lambda}}
=\tilde{R}\indices{^\mu^\lambda_\nu_\lambda}=\tilde{R}\indices{^\mu_\nu}\label{eq:Hilbert-deri4}\\
\pfrac{\tilde{R}}{R\indices{^\eta_{\xi\beta\mu}}}R\indices{^\eta_{\xi\beta\nu}}
&=\delta_\lambda^\mu\delta_\eta^\beta g^{\xi\lambda}\tilde{R}\indices{^\eta_{\xi\beta\nu}}
=\tilde{R}\indices{^\eta^\mu_\eta_\nu}=\tilde{R}\indices{^\mu_\nu}\label{eq:Hilbert-deri5}.
\end{align}
\end{subequations}
The identity~(\ref{eq:constraint2-lag}) is obviously satisfied by Eqs.~(\ref{eq:Hilbert-deris}).
The derivatives of the $\tilde{\LCd}_{E}$ with respect to the metric
define the \emph{metric} energy-momentum tensor,
\begin{align}
T\indices{^\mu_\nu}&=\frac{1}{\sqrt{-g}}\left(\pfrac{\tilde{\LCd}_{E}}{g^{\nu\beta}}g^{\mu\beta}
+\pfrac{\tilde{\LCd}_{E}}{g^{\beta\nu}}g^{\beta\mu}\right)\nonumber\\
&=\frac{1}{16\pi G}\left(R\indices{_\nu^\mu}+R\indices{^\mu_\nu}-\delta_\nu^\mu R\right).
\label{eq:hilbert-emt-met}
\end{align}
By virtue of the identity~(\ref{eq:constraint2-lag}), this tensor can be identically replaced by its derivatives with respect to the Riemann tensor.
The corresponding \emph{canonical} energy-momentum tensor $\vartheta\indices{^\mu_\nu}$ is defined by:
\begin{align}
\vartheta\indices{^\mu_\nu}&=\pfrac{\LCd_{E}}{R\indices{^\eta_{\xi\mu\lambda}}}R\indices{^\eta_{\xi\nu\lambda}}
+\pfrac{\LCd_{E}}{R\indices{^\eta_{\xi\beta\mu}}}R\indices{^\eta_{\xi\beta\nu}}-\delta_\nu^\mu\LCd_{E}\nonumber\\
&=\frac{1}{16\pi G}\left(2R\indices{^\mu_\nu}-\delta_\nu^\mu R\right).
\label{eq:hilbert-emt-can}
\end{align}
Both tensors are thus related as:
\begin{equation*}
T\indices{^\mu_\nu}=\vartheta\indices{^\mu_\nu}+\frac{1}{16\pi G}\left(R\indices{_\nu^\mu}-R\indices{^\mu_\nu}\right)
\end{equation*}
and hence agree for a symmetric Ricci tensor $R_{\mu\nu}\equiv R_{\nu\mu}$, which is induced by symmetric connections
$\Gamma\indices{^{\eta}_{\xi\lambda}}\equiv\Gamma\indices{^{\eta}_{\lambda\xi}}$ and hence a torsion-free space-time.

Allowing for an asymmetric connection, Eq.~(\ref{eq:hilbert-emt-can}) yields an asymmetric canonical energy-momentum tensor.
In their covariant representations both energy-momentum tensors are:
\begin{align}
T_{\nu\mu}&=\frac{1}{8\pi G}\left(R_{(\nu\mu)}-\onehalf g_{\nu\mu}R\right)\equiv T_{(\nu\mu)}\label{eq:emt-met-hilbert}\\
\vartheta_{\nu\mu}&=\frac{1}{8\pi G}\left(R_{\nu\mu}-\onehalf g_{\nu\mu}R\right),
\end{align}
which gives the relations
\begin{equation}\label{eq:emt-can-hilbert}
\vartheta_{(\nu\mu)}=T_{\nu\mu},\qquad\vartheta_{[\nu\mu]}=\frac{1}{8\pi G} R_{[\nu\mu]}.
\end{equation}
Note that these derivations hold for both, the metric and the Palatini formulation as both are based on a covariant scalar density Lagrangian.
\subsubsection{Vierbein formulation of the Einstein-Hilbert-Cartan Lagrangian}\label{sec:hilbert-tetr}
For including fermions to the variety of matter fields the description of space-time geometry
must be modified by considering inertial frames at every point of the underlying Riemann-Cartan geometry.
These frames are provided by a ``vierbein'' (aka tetrad) field of four orthonormal basis vectors $\pmb{e}_i(x)$
with the coordinates $e\indices{_i^\mu}(x)$.
The Lorentz (or nonholonomic) indices $i$ refer to the local inertial (Lorentz) frame while $\mu$ are the metric (holonomic) indices.
Dual vierbeins, $\pmb{e}^i(x)$, satisfy the orthonormality relation $\pmb{e}^i \pmb{e}_j = \eta_{ij}$ with the Minkowski metric $\eta_{ij}$.
The Lagrangian~(\ref{eq:ricci-scalar}) may now be expressed equivalently in terms of the vierbeins as
\begin{equation}\label{eq:ricci-scalar-tetr}
16\pi G \tilde{\LCd}_{E}=R\indices{^k_{i\xi\lambda}} e\indices{_k^\xi} e\indices{_j^\lambda} \eta^{ij} \dete=R \dete=\tilde{R},
\end{equation}
where $\dete\equiv\sqrt{-g}$ is the determinant of the dual vierbein field $e\indices{^i_\lambda}$.
The identity~(\ref{eq:constraint2-lag}) then separates into the two index classes, namely the metric index class (holonomic indices)
\begin{equation}\label{eq:Ricci-identity1}
-\pfrac{\tilde{\LCd}_{E}}{e\indices{_m^\nu}}e\indices{_m^\mu}
+\pfrac{\tilde{\LCd}_{E}}{R\indices{^n_{m\mu\lambda}}}R\indices{^n_{m\nu\lambda}}
+\pfrac{\tilde{\LCd}_{E}}{R\indices{^n_{m\beta\mu}}}R\indices{^n_{m\beta\nu}}
\equiv\delta_\nu^\mu\tilde{\LCd}_{E},
\end{equation}
and the Lorentz index class (nonholonomic indices) of the vierbeins:
\begin{align}
&-\pfrac{\tilde{\LCd}_{E}}{e\indices{_n^\beta}}e\indices{_m^\beta}
+\pfrac{\tilde{\LCd}_{E}}{\eta\indices{^m^k}}\eta\indices{^n^k}
+\pfrac{\tilde{\LCd}_{E}}{\eta\indices{^k^m}}\eta\indices{^k^n}\nonumber\\
&+\pfrac{\tilde{\LCd}_{E}}{R\indices{^m_{k\beta\lambda}}}R\indices{^n_{k\beta\lambda}}
-\pfrac{\tilde{\LCd}_{E}}{R\indices{^k_{n\beta\lambda}}}R\indices{^k_{m\beta\lambda}}
\equiv\delta_m^n\tilde{\LCd}_{E}.
\label{eq:Ricci-identity2}
\end{align}
To confirm Eqs.~(\ref{eq:Ricci-identity1}) and~(\ref{eq:Ricci-identity2}), we use
\begin{equation*}
\pfrac{\dete}{e\indices{_i^\nu}}=-e\indices{^i_\nu} \dete,
\end{equation*}
to find for the terms with open metric indices and contracted vierbein indices:
\begin{align*}
\pfrac{\tilde{R}}{e\indices{_m^\nu}}e\indices{_m^\mu}
&=R\indices{^k_{i\xi\lambda}}\left(\delta_\nu^\xi \delta_k^m e\indices{_j^\lambda}+e\indices{_k^\xi} \delta_\nu^\lambda \delta_j^m
-e\indices{_k^\xi} e\indices{_j^\lambda} e\indices{^m_\nu}\right)e\indices{_m^\mu} \eta^{ij} \dete\\
&=\left(R\indices{^m_{i\nu\lambda}} e\indices{_j^\lambda} e\indices{_m^\mu}
+R\indices{^k_{i\xi\nu}} e\indices{_k^\xi} e\indices{_j^\mu}
-\delta_\nu^\mu R\indices{^k_{i\xi\lambda}} e\indices{_k^\xi} e\indices{_j^\lambda}\right)\eta^{ij} \dete\\
&=\left(2R\indices{^\mu_\nu}-\delta_\nu^\mu R\right)\dete
\end{align*}
\begin{align*}
\pfrac{\tilde{R}}{R\indices{^n_{m\mu\lambda}}}R\indices{^n_{m\nu\lambda}}
&=\delta_n^k \delta_i^m\delta_\xi^\mu e\indices{_k^\xi} e\indices{_j^\lambda}
R\indices{^n_{m\nu\lambda}} \eta^{ij} \dete
=R\indices{^\mu_\nu} \dete\\
\pfrac{\tilde{R}}{R\indices{^n_{m\beta\mu}}}R\indices{^n_{m\beta\nu}}
&=\delta_n^k \delta_i^m \delta_\xi^\beta \delta_\lambda^\mu e\indices{_k^\xi} e\indices{_j^\lambda}
R\indices{^n_{m\beta\nu}} \eta^{ij} \dete
=R\indices{^\mu_\nu} \dete,
\end{align*}
which sums up to yield the right-hand side of~(\ref{eq:Ricci-identity1}).
With the canonical energy-momentum tensor density corresponding to Eq.~(\ref{eq:emt-can-gen}), which is obtained here from Eq.~(\ref{eq:Ricci-identity1}) as
\begin{equation*}
\tilde{\vartheta}\indices{^\mu_\nu}=\pfrac{\tilde{\LCd}_{E}}{R\indices{^n_{m\mu\lambda}}}R\indices{^n_{m\nu\lambda}}
+\pfrac{\tilde{\LCd}_{E}}{R\indices{^n_{m\beta\mu}}}R\indices{^n_{m\beta\nu}}-\delta_\nu^\mu\tilde{\LCd}_{E},
\end{equation*}
the ``vierbein'' energy-momentum tensor $\bar{T}\indices{^\mu_\nu}$ is not symmetric and coincides with the canonical one:
\begin{equation*}
\bar{T}\indices{^\mu_\nu}=\frac{1}{\dete}\pfrac{\tilde{\LCd}_{E}}{e\indices{_m^\nu}}e\indices{_m^\mu}
=\vartheta\indices{^\mu_\nu}=\frac{1}{8\pi G}\left(R\indices{^\mu_\nu}-\onehalf\delta_\nu^\mu R\right).
\end{equation*}
In the same way, we set up the terms with contracted metric indices while leaving the vierbein indices open:
\begin{align*}
-\pfrac{\tilde{R}}{e\indices{_n^\tau}}e\indices{_m^\tau}
&=-R\indices{^k_{i\xi\lambda}}\left(\delta_\tau^\xi \delta_k^n e\indices{_j^\lambda}+e\indices{_k^\xi} \delta^\lambda_\tau \delta^n_j
-e\indices{_k^\xi} e\indices{_j^\lambda} e\indices{^n_\tau}\right)e\indices{_m^\tau} \eta^{ij} \dete\\
&=\left(\delta_m^n R-2R\indices{^n_m}\right)\dete
\end{align*}
\begin{align*}
\pfrac{\tilde{R}}{\eta\indices{^m^k}}\eta\indices{^n^k}+\pfrac{\tilde{R}}{\eta\indices{^k^m}}\eta\indices{^k^n}
&=\left(R\indices{^n_m}+R\indices{_m^n}\right)\dete\\
\pfrac{\tilde{R}}{R\indices{^m_{k\beta\lambda}}}R\indices{^n_{k\beta\lambda}}
&=\delta_m^l \delta_i^k \delta_\xi^\beta e\indices{_l^\xi} e\indices{_j^\lambda}
R\indices{^n_{k\beta\lambda}} \eta^{ij} \dete=R\indices{^n_m} \dete\\
-\pfrac{\tilde{R}}{R\indices{^k_{n\beta\lambda}}}R\indices{^k_{m\beta\lambda}}
&=-\delta_k^l \delta_i^n \delta_\xi^\beta e\indices{_l^\xi} e\indices{_j^\lambda}
R\indices{^k_{m\beta\lambda}} \eta^{ij} \dete=-R\indices{_m^n} \dete,
\end{align*}
which sums up to $\delta_m^n \tilde{R}$ and thus verifies the identity~(\ref{eq:Ricci-identity2}).
\subsubsection{Quadratic gravity with Riemann tensor squared}
The gravity Lagrangian $\LCd_{\mathrm{Rie}^2}$ constructed form the complete contraction of two Riemann-Cartan tensors,
also referred to as the Kretschmann scalar, was already proposed by Einstein in a private letter to H.~Weyl~\citep{einstein18}.
It is defined by:
\begin{equation*}
\tilde{\LCd}_{\mathrm{Rie}^2}=-\quarter R\indices{^\xi_{\eta\rho\alpha}}R\indices{^\eta_{\xi\sigma\lambda}} g^{\rho\sigma} g^{\alpha\lambda}\sqrt{-g}.
\end{equation*}
It is again directly verified that the left-hand side of the identity~(\ref{eq:constraint2-lag}) gives
\begin{subequations}\label{eq:riem-deri-all}
\begin{equation}
\pfrac{\tilde{\LCd}_{\mathrm{Rie}^2}}{g^{\nu\beta}}g^{\mu\beta}+\pfrac{\tilde{\LCd}_{\mathrm{Rie}^2}}{g^{\beta\nu}}g^{\beta\mu}
=R\indices{_{\eta\rho\tau}^\mu}R\indices{^{\eta\rho\tau}_\nu}\sqrt{-g}-\delta_\nu^\mu \tilde{\LCd}_{\mathrm{Rie}^2},
\end{equation}
while the terms on the right-hand side evaluate to
\allowdisplaybreaks
\begin{align}
-\pfrac{\tilde{\LCd}_{\mathrm{Rie}^2}}{R\indices{^\nu_{\alpha\beta\lambda}}}R\indices{^\mu_{\alpha\beta\lambda}}
&=-\onehalf R\indices{^\mu_{\alpha\beta\lambda}} R\indices{_\nu^{\alpha\beta\lambda}}\sqrt{-g}\label{eq:riem-deri-1}\\
\pfrac{\tilde{\LCd}_{\mathrm{Rie}^2}}{R\indices{^\eta_{\mu\beta\lambda}}}R\indices{^\eta_{\nu\beta\lambda}}
&=\hphantom{-}\onehalf R\indices{^\mu_{\alpha\beta\lambda}} R\indices{_\nu^{\alpha\beta\lambda}}\sqrt{-g}\label{eq:riem-deri-2}\\
\pfrac{\tilde{\LCd}_{\mathrm{Rie}^2}}{R\indices{^\eta_{\alpha\mu\lambda}}}R\indices{^\eta_{\alpha\nu\lambda}}
&=\hphantom{-}\onehalf R\indices{^\eta_{\alpha\beta}^\mu} R\indices{_\eta^{\alpha\beta}_\nu}\sqrt{-g}\label{eq:riem-deri-3}\\
\pfrac{\tilde{\LCd}_{\mathrm{Rie}^2}}{R\indices{^\eta_{\alpha\beta\mu}}}R\indices{^\eta_{\alpha\beta\nu}}
&=\hphantom{-}\onehalf R\indices{^\eta_{\alpha\beta}^\mu} R\indices{_\eta^{\alpha\beta}_\nu}\sqrt{-g}\label{eq:riem-deri-4}.
\end{align}
\end{subequations}
\allowdisplaybreaks[0]
The terms~(\ref{eq:riem-deri-all}) indeed satisfy the general form of the identity~(\ref{eq:constraint2-lag}).
As the terms~(\ref{eq:riem-deri-1}) and~(\ref{eq:riem-deri-2}) cancel, the remaining terms~(\ref{eq:riem-deri-3}) and~(\ref{eq:riem-deri-4})
form the canonical energy-momentum tensor, which is symmetric as it agrees with the metric one:
\begin{align*}
T_{\nu\mu}&=\frac{2}{\sqrt{-g}}\pfrac{\tilde{\LCd}_{\mathrm{Rie}^2}}{g^{\nu\mu}}
=R\indices{_{\eta\rho\tau\mu}}R\indices{^{\eta\rho\tau}_\nu}-g_{\nu\mu}\LCd_{\mathrm{Rie}^2}\\
\vartheta_{\nu\mu}&=
2\pfrac{\tilde{\LCd}_{\mathrm{Rie}^2}}{R\indices{^\eta_{\alpha\beta}^\mu}}R\indices{^\eta_{\alpha\beta\nu}}-g_{\nu\mu}\LCd_{\mathrm{Rie}^2}=T_{\nu\mu}.
\end{align*}
Consequently, a pure $\LCd_{\mathrm{Rie}^2}$ model for the dynamics of the free gravitational field leads to a \emph{symmetric}
canonical energy-momentum tensor.
It thus does not contribute to torsion dynamics of space-time, as described by Eq.~(\ref{eq:emt-can-hilbert}).
\subsubsection{Quadratic gravity with Ricci tensor squared}
The scalar density made of the square of the (not necessarily symmetric) Ricci tensor $R_{\eta\alpha}=R\indices{^\rho_{\eta\rho\alpha}}$
is defined by the following contraction with the metric
\begin{equation}\label{eq:ricci-tensor-scalar}
\tilde{\LCd}_{\mathrm{Ric}^2}=\onehalf R\indices{_{\eta\alpha}}R\indices{_{\xi\lambda}} g^{\eta\xi} g^{\alpha\lambda}\sqrt{-g}.
\end{equation}
For the Lagrangian~(\ref{eq:ricci-tensor-scalar}), the general identity~(\ref{eq:constraint2-lag}) reduces to:
\begin{equation}\label{eq:identity-ricci}
-\pfrac{\tilde{\LCd}_{\mathrm{Ric}^2}}{g^{\nu\beta}}g^{\mu\beta}-\pfrac{\tilde{\LCd}_{\mathrm{Ric}^2}}{g^{\beta\nu}}g^{\beta\mu}
+\pfrac{\tilde{\LCd}_{\mathrm{Ric}^2}}{R\indices{_{\mu\beta}}}R\indices{_{\nu\beta}}
+\pfrac{\tilde{\LCd}_{\mathrm{Ric}^2}}{R\indices{_{\beta\mu}}}R\indices{_{\beta\nu}}
\equiv\delta_\nu^\mu \tilde{\LCd}_{\mathrm{Ric}^2}.
\end{equation}
Explicitly, the derivative terms with respect to the metric are
\begin{subequations}\label{eq:ric-deri-all}
\begin{equation}
\pfrac{\tilde{\LCd}_{\mathrm{Ric}^2}}{g^{\nu\beta}}g^{\mu\beta}+\pfrac{\tilde{\LCd}_{\mathrm{Ric}^2}}{g^{\beta\nu}}g^{\beta\mu}
=\left(R\indices{_{\nu\beta}}R\indices{^{\mu\beta}}+R\indices{_{\beta\nu}}R\indices{^{\beta\mu}}\right)\!\sqrt{-g}-\delta_\nu^\mu \tilde{\LCd}_{\mathrm{Ric}^2}
\end{equation}
whereas the derivative terms with respect to the Ricci tensor follow as
\begin{align}
\pfrac{\tilde{\LCd}_{\mathrm{Ric}^2}}{R\indices{_{\mu\beta}}}R\indices{_{\nu\beta}}
&=R\indices{^{\mu\beta}} R\indices{_{\nu\beta}}\sqrt{-g}\label{eq:ric-deri-2}\\
\pfrac{\tilde{\LCd}_{\mathrm{Ric}^2}}{R\indices{_{\beta\mu}}}R\indices{_{\beta\nu}}
&=R\indices{^{\beta\mu}} R\indices{_{\beta\nu}}\sqrt{-g}\label{eq:ric-deri-3}.
\end{align}
\end{subequations}
Clearly, the right-hand sides of Eqs.~(\ref{eq:ric-deri-all}) satisfy Eq.~(\ref{eq:identity-ricci}).
The metric energy-momentum tensor is thus:
\begin{equation*}
T\indices{^\mu_\nu}=\frac{2}{\sqrt{-g}}\pfrac{\tilde{\LCd}_{\mathrm{Ric}^2}}{g^{\nu\beta}}g^{\mu\beta}=R\indices{^\mu^\beta}R\indices{_{\nu\beta}}
+R\indices{^\beta^\mu}R\indices{_{\beta\nu}}-\delta_\nu^\mu \LCd_{\mathrm{Ric}^2}.
\end{equation*}
Since the canonical energy-momentum tensor emerges as:
\begin{align*}
\vartheta\indices{^\mu_\nu}&=\pfrac{\LCd_{\mathrm{Ric}^2}}{R\indices{_{\mu\beta}}}R\indices{_{\nu\beta}}
+\pfrac{\LCd_{\mathrm{Ric}^2}}{R\indices{_{\beta\mu}}}R\indices{_{\beta\nu}}-\delta_\nu^\mu \LCd_{\mathrm{Ric}^2}\\
&=R\indices{^{\mu\beta}} R\indices{_{\nu\beta}}+R\indices{^{\beta\mu}} R\indices{_{\beta\nu}}-\delta_\nu^\mu \LCd_{\mathrm{Ric}^2},
\end{align*}
it coincides with the metric one.
$\vartheta\indices{_\mu_\nu}$ is thus symmetric in this case even if the Ricci tensor itself is non-symmetric.
As a consequence, it cannot act as a counterpart of a non-symmetric matter field energy-momentum tensor in an Einstein-type equation.
\subsubsection{Quadratic gravity with Ricci scalar squared}
The scalar density made of the square of the Ricci scalar $R=R_{\eta\alpha}g^{\eta\alpha}=R\indices{^\rho_{\eta\rho\alpha}}g^{\eta\alpha}$
is defined by the following contraction with the metric
\begin{equation}\label{eq:ricci-scalar-scalar}
\tilde{\LCd}_{\mathrm{RiS}^2}=\onehalf R\indices{_{\eta\alpha}}R\indices{_{\xi\lambda}} g^{\eta\alpha} g^{\xi\lambda}\sqrt{-g}.
\end{equation}
The identity~(\ref{eq:identity-ricci}) consisting of the derivative terms with respect to the metric
\begin{align*}
\pfrac{\tilde{\LCd}_{\mathrm{RiS}^2}}{g^{\nu\beta}}g^{\mu\beta}+\pfrac{\tilde{\LCd}_{\mathrm{RiS}^2}}{g^{\beta\nu}}g^{\beta\mu}
&=R\indices{_\beta^\beta}\left(R\indices{_\nu^\mu}+R\indices{^\mu_\nu}\right)\sqrt{-g}-\delta_\nu^\mu \tilde{\LCd}_{\mathrm{RiS}^2},
\end{align*}
and the derivative terms with respect to the Ricci tensor
\begin{subequations}
\begin{align}
\pfrac{\tilde{\LCd}_{\mathrm{RiS}^2}}{R\indices{_{\mu\beta}}}R\indices{_{\nu\beta}}
&=R\indices{_\beta^\beta} R\indices{_\nu^\mu}\sqrt{-g}\label{eq:ris-deri-2}\\
\pfrac{\tilde{\LCd}_{\mathrm{RiS}^2}}{R\indices{_{\beta\mu}}}R\indices{_{\beta\nu}}
&=R\indices{_\beta^\beta} R\indices{^\mu_\nu}\sqrt{-g}\label{eq:ris-deri-3}
\end{align}
\end{subequations}
is obviously satisfied.
The metric energy-momentum tensor is thus given by:
\begin{equation*}
T\indices{^\mu_\nu}=\frac{2}{\sqrt{-g}}\pfrac{\tilde{\LCd}_{\mathrm{RiS}^2}}{g^{\nu\beta}}g^{\mu\beta}=R\indices{_\beta^\beta}\left(R\indices{_\nu^\mu}
+R\indices{^\mu_\nu}\right)-\delta_\nu^\mu \LCd_{\mathrm{RiS}^2},
\end{equation*}
whereas the canonical energy-momentum tensor emerges as:
\begin{align*}
\vartheta\indices{^\mu_\nu}&=\pfrac{\LCd_{\mathrm{RiS}^2}}{R\indices{_{\mu\beta}}}R\indices{_{\nu\beta}}
+\pfrac{\LCd_{\mathrm{RiS}^2}}{R\indices{_{\beta\mu}}}R\indices{_{\beta\nu}}-\delta_\nu^\mu \LCd_{\mathrm{RiS}^2}\\
&=R\indices{_\beta^\beta}\left(R\indices{_\nu^\mu}+R\indices{^\mu_\nu}\right)-\delta_\nu^\mu \LCd_{\mathrm{RiS}^2}.
\end{align*}
Also in this case, the canonical energy-momentum tensor coincides with the metric one.
\subsection{Consistency equation of the gauge theory of gravity}
From the covariant canonical Hamiltonian formulation of the gauge theory of gravity (CCGG)~\citep{struckmeier17a}
one derives the second rank tensor ``consistency equation'':
\begin{align}
&\hphantom{=}-2\pfrac{\tilde{\HCd}_\mathrm{Gr}}{g_{\alpha\mu}}g_{\alpha\nu}+
2\tilde{k}^{ \alpha\mu\beta}\pfrac{\tilde{\HCd}_\mathrm{Gr}}{\tilde{k}\indices{^{\alpha\nu\beta}}}-
\tilde{q}\indices{_{\nu}^{\eta\alpha\beta}}\pfrac{\tilde{\HCd}_\mathrm{Gr}}{\tilde{q}\indices{_{\mu}^{\eta\alpha\beta}}}+
\tilde{q}\indices{_{\eta}^{\mu\alpha\beta}}\pfrac{\tilde{\HCd}_\mathrm{Gr}}{\tilde{q}\indices{_{\eta}^{\nu\alpha\beta}}}\nonumber\\
&=\pfrac{\tilde{\HCd}_{0}}{a_{\mu}}a_{\nu}-\tilde{p}^{ \mu\beta}\pfrac{\tilde{\HCd}_{0}}{\tilde{p}^{ \nu\beta}}+
2\pfrac{\tilde{\HCd}_{0}}{g_{\alpha\mu}}g_{\alpha\nu},
\label{eq:consistency2}
\end{align}
which has the Lagrangian representation:
\begin{align}
&-2\pfrac{\tilde{\LCd}_\mathrm{Gr}}{g^{\alpha\nu}}g^{\alpha\mu}+
2\pfrac{\tilde{\LCd}_\mathrm{Gr}}{g_{\alpha\mu;\beta}}g_{\alpha\nu;\beta}-
\pfrac{\tilde{\LCd}_\mathrm{Gr}}{R\indices{^{\nu}_{\eta\alpha\beta}}}R\indices{^{\mu}_{\eta\alpha\beta}}+
\pfrac{\tilde{\LCd}_\mathrm{Gr}}{R\indices{^{\eta}_{\mu\alpha\beta}}}R\indices{^{\eta}_{\nu\alpha\beta}}\nonumber\\
&=-\pfrac{\tilde{\LCd}_{0}}{a_{\mu}}a_{\nu}-\pfrac{\tilde{\LCd}_{0}}{a_{\mu;\beta}}a_{\nu;\beta}+
2\pfrac{\tilde{\LCd}_{0}}{g^{\alpha\nu}}g^{\alpha\mu}.
\label{eq:consistency2a}
\end{align}
$\tilde{\LCd}_\mathrm{Gr}$ represents a generic Lagrangian density of vacuum gravity, and $\tilde{\LCd}_0$
can be seen as representing scalar and massive vector fields, i.e. $\tilde{\LCd}_0 = \tilde{\LCd}_\mathrm{KG}+\tilde{\LCd}_\mathrm{P}$.
By virtue of the \emph{identities} for scalar density-valued functions of arbitrary tensors and the metric, given here by
\begin{align*}
\delta_{\nu}^{\mu}\tilde{\LCd}_{0}&\equiv
\pfrac{\tilde{\LCd}_{0}}{\left(\pfrac{\phi}{x^{\mu}}\right)}\pfrac{\phi}{x^{\nu}}
+\pfrac{\tilde{\LCd}_{0}}{a_{\mu}}a_{\nu}
+\pfrac{\tilde{\LCd}_{0}}{a_{\mu;\beta}}a_{\nu;\beta}+\pfrac{\tilde{\LCd}_{0}}{a_{\beta;\mu}}a_{\beta;\nu}\nonumber\\
&\hphantom{\equiv}-2\pfrac{\tilde{\LCd}_{0}}{g^{\alpha\nu}}g^{\alpha\mu}\nonumber\\
\delta_{\nu}^{\mu}\tilde{\LCd}_\mathrm{Gr}&\equiv
-2\pfrac{\tilde{\LCd}_\mathrm{Gr}}{g^{\alpha\nu}}g^{\alpha\mu}+
2\pfrac{\tilde{\LCd}_\mathrm{Gr}}{g_{\alpha\mu;\beta}}g_{\alpha\nu;\beta}
+\pfrac{\tilde{\LCd}_\mathrm{Gr}}{g_{\alpha\beta;\mu}}g_{\alpha\beta;\nu}\nonumber\\
&\hphantom{\equiv}-\pfrac{\tilde{\LCd}_\mathrm{Gr}}{R\indices{^{\nu}_{\eta\alpha\beta}}}R\indices{^{\mu}_{\eta\alpha\beta}}
+\pfrac{\tilde{\LCd}_\mathrm{Gr}}{R\indices{^{\eta}_{\mu\alpha\beta}}}R\indices{^{\eta}_{\nu\alpha\beta}}
+2\pfrac{\tilde{\LCd}_\mathrm{Gr}}{R\indices{^{\eta}_{\alpha\beta\mu}}}R\indices{^{\eta}_{\alpha\beta\nu}},
\end{align*}
the consistency equation~(\ref{eq:consistency2a}) has the equivalent representation:
\begin{align}
&\hphantom{=}\pfrac{\tilde{\LCd}_\mathrm{Gr}}{g_{\alpha\beta;\mu}}g_{\alpha\beta;\nu}
+2\pfrac{\tilde{\LCd}_\mathrm{Gr}}{R\indices{^\eta_{\alpha\beta\mu}}}R\indices{^\eta_{\alpha\beta\nu}}-\delta_\nu^\mu\tilde{\LCd}_\mathrm{Gr}\nonumber\\
&=-\Bigg(\pfrac{\tilde{\LCd}_0}{\left(\pfrac{\phi}{x^\mu}\right)}\pfrac{\phi}{x^\nu}
+\pfrac{\tilde{\LCd}_0}{a_{\beta;\mu}}a_{\beta;\nu}-\delta_\nu^\mu\tilde{\LCd}_0\Bigg).
\label{eq:consistency3}
\end{align}
The right-hand side of Eq.~(\ref{eq:consistency3}) is exactly minus the covariant representation of the
non-symmetric \emph{canonical} energy-momentum tensor density $\tilde{\theta}\indices{^\mu_\nu}$ of the system $\tilde{\LCd}_0$:
\begin{equation*}
\tilde{\theta}\indices{^\mu_\nu}=\pfrac{\tilde{\LCd}_{0}}{\left(\pfrac{\phi}{x^\mu}\right)}\pfrac{\phi}{x^\nu}
+\pfrac{\tilde{\LCd}_{0}}{a_{\beta;\mu}}a_{\beta;\nu}-\delta_\nu^\mu\tilde{\LCd}_{0}.
\end{equation*}
We note that the covariant derivatives of the vector field term causes the connection
besides the metric to act as an additional coupling to the left-hand side of Eq.~(\ref{eq:consistency3}).
If the latter is interpreted as the \emph{canonical} energy-momentum tensor density
associated with the ``free'' gravitational field Lagrangian $\tilde{\LCd}_\mathrm{Gr}$,
then, Eq.~(\ref{eq:consistency3}) is equivalently written as
\begin{equation}\label{eq:consistency4}
\pfrac{\LCd_\mathrm{Gr}}{g_{\alpha\beta;\mu}}g_{\alpha\beta;\nu}+
2\pfrac{\LCd_\mathrm{Gr}}{R\indices{^\eta_{\alpha\beta\mu}}}R\indices{^\eta_{\alpha\beta\nu}}
-\delta_\nu^\mu\LCd_\mathrm{Gr}=-\theta\indices{^\mu_\nu}.
\end{equation}
This is the general Einstein-type equation that applies for any postulated ``free gravity'' Lagrangian $\LCd_\mathrm{Gr}$
describing the dynamics of the gravitational field in source-free regions of space-time including torsion and non-metricity.
Moreover, written as
\begin{equation}
\vartheta\indices{^\mu_\nu}+\theta\indices{^\mu_\nu}=0
\end{equation}
this represents a balance equation of energy and momentum of matter and space-time, often expressed as the
\emph{zero-energy universe}~\citep{lorentz1916,levi-civita1917,jordan39,sciama53,feynman62,hawking03}.

For the common case of metricity, hence for a covariantly conserved metric ($g_{\alpha\beta;\nu}\equiv0$), Eq.~(\ref{eq:consistency4}) reduces to:
\begin{equation}\label{eq:consistency5}\boxed{
2\pfrac{\LCd_\mathrm{Gr}}{R\indices{^\eta_{\alpha\beta\mu}}}R\indices{^\eta_{\alpha\beta}^\nu}
-g^{\mu\nu}\LCd_\mathrm{Gr}=-\theta\indices{^\mu^\nu}.}
\end{equation}
For the Einstein-Hilbert-Cartan Lagrangian $\LCd_{\mathrm{Gr,E}}=R/16\pi G$, discussed in Sect.~\ref{sec:ehc-Lag},
which constitutes a particular model Lagrangian to describe the dynamics of the free gravitational field,
Eq.~(\ref{eq:consistency5}) immediately yields the Einstein equation.
It also holds for non-zero torsion in the case of a non-symmetric matter field tensor $\theta\indices{^\mu^\nu}$,
which requires the Ricci tensor $R\indices{^\mu^\nu}$ to be non-symmetric as well:
\begin{align*}
R^{(\mu\nu)}-\onehalf g^{\mu\nu}R&=-8\pi G \theta^{(\mu\nu)}\\
R^{[\mu\nu]}&=-8\pi G \theta^{[\mu\nu]}.
\end{align*}
The first equation is the classical Einstein equation.
The second equation follows  from Eq.~(\ref{eq:emt-can-hilbert}) and can be expressed in terms of the skew-symmetric portion
of the affine connection which is the \emph{torsion} $S\indices{^{\lambda}_{\mu\nu}}$ of space-time:
\begin{equation*}
\left[\left(\nabla_\lambda-2S_\lambda\right)S\indices{^\lambda_\mu_\nu}
-\left(\nabla_\mu S_\nu-\nabla_\nu S_\mu\right)\right]=-{8\pi G} \theta_{[\mu\nu]}.
\end{equation*}
In the last equation we used the Bianchi identity relating torsion and the antisymmetric Ricci tensor.
Obviously, a non-symmetric canonical energy-momentum tensor emerges as a source of {torsion} of space-time.
\section{Conclusions}\label{sec:conclusions}
The theorem on (relative) scalar-valued functions of tensors reflects the linear structure of tensor spaces
and thereby constitutes an analogue to Euler's theorem on homogeneous functions.
When applied to Lagrangian densities $\tilde{\LCd}$ of classical field theories, the resulting identities
provide relations between the metric and canonical versions of the energy-momentum tensors of matter fields.
By the same mathematical reasoning, analogous tensors for generic theories of gravity are derived.
The identity thereby sheds new light on the long-standing ambiguity of the proper definition of the energy-momentum of gravity.
Moreover, as a further application of the identity to gauge theories of gravity (see, e.g.~\cite{struckmeier17a}),
a relation of the energy-momentum tensors of both, gravity and matter, emerges.
In combination with the identities that hold separately for both tensors, that relation may be cast into a
simple but nonetheless most general form of an energy-momentum balance equation that confirms the conjecture of ``zero-energy universe''.
The symmetric portion of that balance equation reproduces thereby a generalized version of Einstein's field equation,
while the skew-symmetric portion gives a Poisson-type equation for the torsion of space-time.


\bibliography{identity6}



\end{document}